\newcommand{\be}{\begin{equation}}
\newcommand{\ee}{\end{equation}}
\newcommand{\ba}{\begin{aligned}}
\newcommand{\ea}{\end{aligned}}
\newcommand{\R}{\mathbb{R}}
\newcommand{\N}{\mathbb{N}}
\newcommand{\ind}{\mathbf{1}}
\newcommand{\lsi}{\left[\negthinspace\left[}
\newcommand{\rsi}{\right]\negthinspace\right]}
\newcommand{\lsir}{\left(\negthinspace\left(}
\newcommand{\rsir}{\right)\negthinspace\right)}
\newcommand{\M}{\mathcal{M}}
\newcommand{\A}{\mathcal{A}}
\newcommand{\F}{\mathcal{F}}
\newcommand{\FF}{\mathbb{F}}
\newcommand{\G}{\mathcal{G}}
\newcommand{\GG}{\mathbb{G}}
\newcommand{\HH}{\mathbb{H}}
\newcommand{\barS}{\bar{S}}
\newcommand{\tildeS}{\widetilde{S}}
\newcommand{\tildeX}{\widetilde{X}}
\newcommand{\tildeA}{\widetilde{A}}
\newcommand{\eps}{\varepsilon}
\newtheorem{Thm}{\bf Theorem}[section]
\newtheorem{Def}[Thm]{\bf Definition}
\newtheorem{Prop}[Thm]{\bf Proposition}
\newtheorem{Lem}[Thm]{\bf Lemma}
\newtheorem{Cor}[Thm]{\bf Corollary}
\theoremstyle{remark}
\newtheorem{Rem}[Thm]{\bf Remark}
\newtheorem{Rems}[Thm]{\bf Remarks}
\newtheorem{Ass}{\bf Assumption}
\newtheorem{Ex}[Thm]{\bf Example}
\numberwithin{equation}{section}
\renewcommand*\@fnsymbol[1]{\the#1}
\title{On arbitrages arising from honest times}
\author{Claudio Fontana\thanks{Corresponding author. INRIA Paris - Rocquencourt, Domaine de Voluceau, Rocquencourt, BP 105, Le Chesnay Cedex, 78153, France. E-mail: \texttt{claudio.fontana@inria.fr}.}
\and Monique Jeanblanc\thanks{Université d'Evry Val d'Essonne, Laboratoire Analyse et Probabilités, 23 Bd. de France, 91037, Evry (France). E-mail: \texttt{monique.jeanblanc@univ-evry.fr}.}
\and Shiqi Song\thanks{Université d'Evry Val d'Essonne, Laboratoire Analyse et Probabilités, 23 Bd. de France, 91037, Evry (France). E-mail: \texttt{shiqi.song@univ-evry.fr}.}}
\date{This version: July 18, 2013}
\begin{document}

\maketitle

\abstract{In the context of a general continuous financial market model, we study whether the additional information associated with an \emph{honest time} $\tau$ gives rise to arbitrage profits. By relying on the theory of progressive enlargement of filtrations, we explicitly show that no kind of arbitrage profit can ever be realised strictly before $\tau$, while classical arbitrage opportunities can be realised exactly at $\tau$ as well as after $\tau$. Moreover, stronger arbitrages of the first kind can only be obtained by trading as soon as $\tau$ occurs. We carefully study the behavior of local martingale deflators and consider no-arbitrage-type conditions weaker than NFLVR.}\\

\noindent \textbf{Keywords:} honest time, progressive enlargement of filtrations, arbitrage, free lunch with vanishing risk, local martingale deflator.\\

\noindent \textbf{JEL Classification:} G11, G14.\\

\noindent \textbf{Mathematics Subject Classification (2010):} 60G40, 60G44, 91B44, 91G10.

\section{Introduction and motivation}	\label{S1}

The study of \emph{insider trading} behavior represents a classical issue in mathematical finance and financial economics. Loosely speaking, insider trading phenomena occur when agents having access to different information sets operate in the same financial market. In particular, the better informed agents may try to realise profits by relying on their deeper private knowledge and trading with the less informed agents. Typically, the presence of two distinct layers of information is mathematically represented by two filtrations $\FF=\left(\F_t\right)_{t\geq 0}$ and $\GG=\left(\G_t\right)_{t\geq 0}$ with $\F_t\subseteq\G_t$ for all $t\geq 0$. Intuitively, the filtration $\GG$ represents the information in possession of the insider trader. Assuming that the less informed agents cannot realise arbitrage profits by trading in the market, the fundamental question can be formulated in the following terms: is it possible for the insider trader to realise arbitrage profits by making use of the information contained in the larger filtration $\GG$? And, if yes, what is the appropriate notion of ``arbitrage profit'' and what is the trading strategy which yields that arbitrage profit?

The main goal of the present paper is to give complete and precise answers to the above questions in the context of a general continuous financial market model where the information of the insider is associated to an \emph{honest time} $\tau$. Referring to Section \ref{S2} for a precise definition of the notion of honest time (which seems to have been first introduced by \citet{MSW}), we would like to quote the following passage from \citet{DMM} (page 137) which intuitively explains the concept and seems particularly well suited to the present discussion:

\begin{quote}
\emph{Par exemple $S_t$ peut représenter le cours d'une certaine action à l'instant $t$, et $\tau$ est le moment idéal pour vendre son paquet d'actions. Tous les spéculateurs cherchent à conna\^itre $\tau$ sans jamais y parvenir, d'où son nom de variable al\'eatoire honn\^ete\footnote{We provide an English translation for the convenience of the reader: ``\emph{For instance, $S_t$ may represent the price of some stock at time $t$ and $\tau$ is the optimal time to liquidate a position in that stock. Every speculator strives to know when $\tau$ will occur, without ever achieving this goal. Hence, the name of honest random variable}''.}.}
\end{quote}

We consider a filtered probability space $\left(\Omega,\F,\FF,P\right)$ and let $S=\left(S_t\right)_{t\geq 0}$ represent the discounted price process of some risky assets. The filtration $\GG=\left(\G_t\right)_{t\geq 0}$ is constructed as the \emph{progressive enlargement} of $\FF$ with respect to an honest time $\tau$, which is assumed to avoid all $\FF$-stopping times. For a detailed account of the theory of progressive enlargement of filtrations, we refer the reader to Chapitres IV-V of \citet{Jeu} (see also Section 5.9.4 of \citet{JYC}, Section 9.2 of \citet{Nik} and Section VI.3 of \citet{Pr} for more rapid accounts and the book \citet{MY} for a presentation of the theory in the case where $\FF$ is a Brownian filtration). In this context, the investors who have  access only to the information contained in the filtration $\FF$ represent the ``spéculateurs'' referred to in the passage quoted above. In particular, they are not allowed to construct portfolio strategies based on $\tau$, simply because $\tau$ is not an $\FF$-stopping time. In contrast, an insider trader having access to the full information of the filtration $\GG$ can rely on his private knowledge about $\tau$ when trading in the market and, hence, may have the possibility of realising arbitrage profits.

In the present paper, we provide a complete analysis of the different kinds of arbitrage that can be realised by an insider trader having access to the additional information generated by an honest time. We do not confine ourselves to the classical no-arbitrage theory based on the notions of \emph{Arbitrage Opportunity} and \emph{Free Lunch with Vanishing Risk}, as developed by \citet{DS1}, but we also consider several stronger notions of arbitrage, namely \emph{Unbounded Increasing Profits}, \emph{Arbitrages of the First Kind} and \emph{Unbounded Profits with Bounded Risk}, which are of current interest in mathematical finance, as documented by the recent papers \citet{HS}, \citet{KK}, \citet{Kar,Kar3}, \citet{Shiqi2} and \citet{Tak}. In particular, this allows us to make precise the severity of the arbitrages induced by an honest time $\tau$. Furthermore, and this is a major aspect of the present paper, we carefully distinguish what kinds of arbitrage can be realised \emph{before}, \emph{at} and \emph{after} time $\tau$ and show that arbitrage profits incompatible with market viability can only be realised by trading as soon as $\tau$ occurs. In that sense, the present paper is to the best of our knowledge the first systematic study of the relations existing between progressive enlargements of filtrations with respect to honest times and no-arbitrage-type conditions.

It is already known that an honest time $\tau$ induces arbitrage opportunities in the progressively enlarged filtration $\GG$ immediately \emph{after} time $\tau$, see e.g. \citet{Imk} and \citet{Zw}. In comparison with these papers, our results provide two main innovations. On the one hand, we show that an insider trader can always realise an arbitrage opportunity not only \emph{after} $\tau$ but also exactly \emph{at} time $\tau$. On the other hand, we can explicitly exhibit in a simple way the trading strategies which yield the arbitrage profits. This contrasts with the approach adopted in \citet{Imk} and \citet{Zw}, where the existence of arbitrage opportunities is proved by relying on the abstract results of \citet{DS2}. Moreover, our approach permits to recover the results obtained in \citet{Imk} and \citet{Zw} in a very simple way. A key tool in our analysis is represented by the multiplicative decomposition of the Azéma supermartingale $Z=\left(Z_t\right)_{t\geq 0}$ associated to an honest time $\tau$, established in \citet{NY} (see Section \ref{S2} for precise definitions).

\begin{Ex}	\label{example}
We illustrate the framework of the present paper in the simplest possible setting (detailed proofs and sharper results in a general setting will be given in Sections \ref{S4}-\ref{S6}). Let $W=\left(W_t\right)_{t\geq 0}$ be a one-dimensional Brownian motion on the filtered probability space $(\Omega,\F,\FF^W\!,P)$, where $\FF^W=(\F^W_t)_{t\geq 0}$ denotes the natural filtration of $W$ (augmented by the $P$-nullsets of $\F^W_{\infty}$) and $\F:=\F^W_{\infty}$. Let the process $S=\left(S_t\right)_{t\geq 0}$ represent the discounted price of a risky asset and be given as the solution to the following SDE, for some $\bar{\sigma}\neq0$ and $s\in(0,\infty)$:
\be	\label{BS}	\left\{	\ba
dS_t &= S_t\,\bar{\sigma}\,dW_t,	\\
S_0 &= s.
\ea	\right. \ee
We define the finite random time $\tau:=\sup\left\{t\geq 0:S_t=\sup_{u\geq 0}S_u\right\}$ 
and the filtration $\GG=\left(\G_t\right)_{t\geq 0}$ as the progressive enlargement of $\FF$ with respect to $\tau$ (see Section \ref{S2} for precise definitions). 
\end{Ex} 

We call \emph{informed agent} an agent who can invest in the risky asset $S$ and has access to the information contained in the enlarged filtration $\GG$. The main results of the present paper can then be essentially summarised as follows:
\begin{itemize}
\item[\textbf{(a)}]
an informed agent can always realise arbitrage opportunities \emph{exactly at} time $\tau$, i.e., on the time interval $\left[0,\tau\right]$ (see Theorem \ref{attau}); 
\item[\textbf{(b)}]
it is never possible to realise arbitrage opportunities \emph{strictly before} time $\tau$, i.e., on the time interval $\left[0,\varrho\right]$, for any $\GG$-stopping time $\varrho$ with $\varrho<\tau$ $P$-a.s. (see Corollary \ref{cor-beforetau}). Furthermore, in the case of Example \ref{example}, an informed agent can never realise arbitrage opportunities on the time interval $\left[0,\tau\wedge T\right]$, for every $T\in\left(0,\infty\right)$ (see Theorem \ref{beforetau});
\item[\textbf{(c)}]
an informed agent can always realise arbitrage profits which are stronger than arbitrage opportunities (namely, arbitrages of the first kind, see Definition \ref{def-arb}) by taking a position \emph{as soon as} time $\tau$ has occurred (see Proposition \ref{arb2}) as well as arbitrage opportunities \emph{after} time $\tau$ (see Proposition \ref{arb-tau+eps}).
\end{itemize}
Furthermore, we can explicitly construct the trading strategies which realise the arbitrage profits for the informed agent in (a) and (c): it will be enough to hold appropriate long and short positions, respectively, in the portfolio which replicates the non-negative $\FF$-local martingale $N=\left(N_t\right)_{t\geq 0}$ appearing in the multiplicative decomposition of the Azéma supermartingale $Z=\left(Z_t\right)_{t\geq 0}$ of the random time $\tau$ (see Lemma \ref{NikYor}). In the specific case of Example \ref{example}, these positions will simply reduce to long and short buy-and-hold positions, respectively, in the risky asset $S$ itself (see Example \ref{rem-BS} and part 2 of Remarks \ref{rem-BS-3}).

The study of the impact of the additional information associated to a random time on the no-arbitrage-type properties of a financial market and on the behavior of market participants has already attracted attention in the mathematical finance literature. In particular, \citet{Imk} and \citet{Zw} are the closest precursors to our work (related results also appear in \citet{AI}). In the context of credit risk modelling, a study of the no-arbitrage-type properties of a market model with a filtration progressively enlarged with respect to a random time has also been recently undertaken in \citet{CJN}. We also want to mention that, in the case of \emph{initially} enlarged filtrations (see \citet{Jeu}, Chapitre III, or \citet{Pr}, Section VI.2), the possibility of realising arbitrage profits has been studied in \citet{GP} and \citet{IPW}. Finally, we refer the interested reader to \citet{NP1,NP2} for a detailed analysis of the role of honest times in financial modelling.

The paper is structured as follows. Section \ref{S2} describes the general setting and recalls several no-arbitrage-type conditions as well as some key technical results from the theory of progressive enlargement of filtrations. 
Sections \ref{S3} studies the existence and the properties of local martingale deflators in the progressively enlarged filtration $\GG$ up to different random time horizons, i.e., \emph{at}, \emph{before} and \emph{after} an honest time $\tau$. Sections \ref{S4}, \ref{S5} and \ref{S6} contain the main results on the existence of arbitrage profits with respect to the filtration $\GG$ \emph{at} $\tau$, \emph{before} $\tau$ and \emph{after} $\tau$, respectively. Section \ref{S7} concludes by discussing the role played by the standing Assumptions \ref{ass-NFLVR}-\ref{ass-PRP} introduced in Section \ref{S2} and by pointing out some possible generalisations of the results contained in the present paper.

\section{General setting and preliminary results}	\label{S2}

Let $(\Omega,\F,P)$ be a given probability space endowed with a filtration $\FF=(\F_t)_{t\geq 0}$ satisfying the usual conditions, where $P$ denotes the physical probability measure and $\F:=\F_{\infty}$. We consider a financial market comprising $d+1$ assets, with prices described by the $\R^{d+1}$-valued process $\barS=(\barS_t)_{t\geq 0}$. To allow for greater generality, we consider a financial market model on an infinite time horizon. Of course, financial markets on a finite time horizon $\left[0,T\right]$ can be embedded by simply considering the stopped process $(\barS)^T$. We assume that $\barS^{0}$ represents a numéraire or reference asset and is $P$-a.s. strictly positive. Without loss of generality, we express the prices of all $d+1$ assets in terms of $\barS^{0}$-discounted quantities, thus obtaining the $\R^d$-valued process $S=\left(S_t\right)_{t\geq 0}$, with $S^i:=\barS^i/\barS^{0}$ for each $i=1,\ldots,d$. We assume that the process $S$ is a \emph{continuous semimartingale} on $(\Omega,\F,\FF,P)$.

Let the random time $\tau:\Omega\rightarrow[0,\infty]$ be a $P$-a.s. finite \emph{honest time} on $(\Omega,\F,\FF,P)$. This means that $\tau$ is an $\F$-measurable random variable such that, for all $t>0$, there exists an $\F_t$-measurable random variable $\zeta_t$ with $\tau=\zeta_t$ on $\{\tau<t\}$ (see e.g. \citet{Jeu}, Chapitre V). We define the filtration $\GG=\left(\G_t\right)_{t\geq 0}$ as the \emph{progressive enlargement} of $\FF$ with respect to $\tau$, i.e., $\G_t:=\bigcap_{s>t}\bigl(\F_s\vee\sigma\left(\tau\wedge s\right)\bigr)$ for all $t\geq 0$, augmented by the $P$-nullsets of $\F=\F_{\infty}=\G_{\infty}$. It is well-known that $\GG$ is the smallest filtration satisfying the usual conditions which contains $\FF$ and makes $\tau$ a $\GG$-stopping time. Furthermore, the \emph{$\left(H'\right)$-hypothesis} holds between $\FF$ and $\GG$, meaning that any $\FF$-semimartingale is also a $\GG$-semimartingale (see \citet{Jeu}, Théorème 5.10). In particular, this implies that the discounted price process $S$ is also a $\GG$-semimartingale.

In order to model the activity of trading, we need to define the notion of admissible trading strategy, following \citet{DS1}. Let $\HH$ denote a generic filtration, i.e., in our setting $\HH\in\left\{\FF,\GG\right\}$. We denote by $L^{\HH}\left(S\right)$ the set of all $\R^d$-valued $\HH$-predictable processes $\theta=\left(\theta_t\right)_{t\geq 0}$ which are $S$-integrable in $\HH$ and we write $\theta\cdot S$ for the corresponding stochastic integral process.

\begin{Def}	\label{def-adm}
Let $\HH\in\left\{\FF,\GG\right\}$. For $a\in\R_+$, an element $\theta\in L^{\HH}\left(S\right)$ is said to be an \emph{$a$-admissible $\HH$-strategy} if $(\theta\cdot S)_{\infty}:=\lim_{t\rightarrow\infty}(\theta\cdot S)_t$ exists and $(\theta\cdot S)_t\geq-a$ $P$-a.s. for all $t\geq 0$. We denote by $\A^{\HH}_a$ the set of all $a$-admissible $\HH$-strategies. We say that an element $\theta\in L^{\HH}\left(S\right)$ is an \emph{admissible $\HH$-strategy} if $\theta\in\A^{\HH}:=\bigcup_{a\in\R_+}\!\A^{\HH}_a$.
\end{Def}

We assume that there are no frictions or trading constraints and that trading is done in a self-financing way. This implies that the wealth process generated by trading according to an admissible $\HH$-strategy $\theta$ starting from an initial endowment of $x\in\R$ is given by $V\left(x,\theta\right):=x+\theta\cdot S$, for $\HH\in\left\{\FF,\GG\right\}$. We call \emph{restricted} financial market the tuple $\M^{\FF}:=\left(\Omega,\F,\FF,P\,;S,\A^{\FF}\right)$, as opposed to the \emph{enlarged} financial market $\M^{\GG}:=\left(\Omega,\F,\GG,P\,;S,\A^{\GG}\right)$. Intuitively, agents operating in the enlarged financial market are better informed than agents operating in the restricted financial market, due to the additional information generated by the random time $\tau$.

\begin{Rem}	\label{rem-filtr}
Note that, since $\FF\subseteq\GG$ and all $\FF$-semimartingales are also $\GG$-semimartin\-gales, we have $L^{\FF}\left(S\right)\subseteq L^{\GG}\left(S\right)$, as can be deduced from Proposition 2.1 of \citet{Jeu}. In turn, this implies that $\A^{\FF}\subseteq\A^{\GG}$, thus reflecting the fact that agents in the enlarged financial market are allowed to use a richer information set to construct their portfolios.
\end{Rem}

As mentioned in the introduction, the present paper aims at answering the following question: how does the additional information associated to the honest time $\tau$ give rise to arbitrage profits? To this end, let us first recall three important notions of arbitrage which have appeared in the literature.

\begin{Def}	\label{def-arb}
Let $\HH\in\{\FF,\GG\}$.
\begin{itemize}
\item[(i)]
A non-negative $\F$-measurable random variable $\xi$ with $P\left(\xi>0\right)>0$ yields an \emph{Arbitrage of the First Kind} if for all $x>0$ there exists an element $\theta^x\in\A^{\HH}_x$ such that $V\left(x,\theta^x\right)_{\infty}\geq\xi$ $P$-a.s.
If there exists no such random variable we say that  the financial market $\M^{\HH}$ satisfies the \emph{No Arbitrage of the First Kind (NA1)} condition.
\item[(ii)]
An element $\theta\in\A^{\HH}$ yields an \emph{Arbitrage Opportunity} if $V\left(0,\theta\right)_{\infty}\geq 0$ $P$-a.s. and $P\bigl(V\left(0,\theta\right)_{\infty}>0\bigr)>0$. If there exists no such $\theta\in\A^{\HH}$ we say that the financial market $\M^{\HH}$ satisfies the \emph{No Arbitrage (NA)} condition.
\item[(iii)]
A sequence $\left\{\theta^n\right\}_{n\in\N}\subset\A^{\HH}$ yields a \emph{Free Lunch with Vanishing Risk} if there exist an $\varepsilon>0$ and an increasing sequence $\left\{\delta_{n}\right\}_{n\in\N}$ with $0\leq\delta_{n}\nearrow 1$ such that $P\bigl(V\left(0,\theta^n\right)_{\infty}>-1+\delta_{n}\bigr)=1$ and $P\bigl(V\left(0,\theta^n\right)_{\infty}>\varepsilon\bigr)\geq\varepsilon$. If there exists no such sequence we say that the financial market $\M^{\HH}$ satisfies the \emph{No Free Lunch with Vanishing Risk (NFLVR)} condition.
\end{itemize}
For a (possibly infinite-valued) $\HH$-stopping time $\varrho$, we say that NA1/NA/NFLVR holds in the financial market $\M^{\HH}$ on the time horizon $\left[0,\varrho\right]$ if the financial market $\left(\Omega,\F,\HH,P\,;S^{\varrho},\A^{\HH}\right)$ satisfies NA1/NA/NFLVR, where $S^{\varrho}$ denotes the stopped process $(S_{t\wedge\varrho})_{t\geq 0}$.
\end{Def}

The notion of \emph{No Arbitrage of the First Kind} has been recently introduced by \citet{Kar} and can be shown to be equivalent to the boundedness in probability of the set $\left\{V(1,\theta)_{\infty}:\theta\in\A^{\HH}_1\right\}$, see Proposition 1 of \citet{Kar}. The latter condition has appeared under the name of \emph{No Unbounded Profit with Bounded Risk (NUPBR)} in \citet{KK} but its importance was first recognised by \citet{DS1} and \citet{Kab} (see also \citet{Fon2} for a discussion of the relations between the different conditions introduced in Definition \ref{def-arb}).
The NA1 and NFLVR conditions can both be characterised in purely probabilistic terms. As a preliminary, let us recall the following definition.

\begin{Def}	\label{defl}
Let $\HH\in\{\FF,\GG\}$ and $\varrho$ a (possibly infinite-valued) $\HH$-stopping time.
\begin{itemize}
\item[(i)]
A strictly positive $\HH$-local martingale $L=(L_t)_{t\geq 0}$ with $L_0=1$ and $L_{\infty}>0$ $P$-a.s. is said to be a \emph{local martingale deflator in $\HH$ on the time horizon $\left[0,\varrho\right]$} if the process $L\,S^{\varrho}$ is an $\HH$-local martingale;
\item[(ii)]
a probability measure $Q\sim P$ on $(\Omega,\F)$ is said to be an \emph{Equivalent Local Martingale Measure in $\HH$ (ELMM$_{\HH}$) on the time horizon $\left[0,\varrho\right]$} if the process $S^{\varrho}$ is an $\HH$-local martingale under $Q$.
\end{itemize}
\end{Def}

Note that the concept of local martingale deflator corresponds to the notion of \emph{strict martingale density} first introduced by \citet{Sch1}. We then have the following fundamental theorem. The first assertion is a partial statement of Theorem 4 of \citet{Kar} (noting that the proof carries over to the infinite time horizon case), while the last two assertions follow from Corollary 1.2 and Corollary 3.4 together with Proposition 3.6, respectively, of \citet{DS1}.

\begin{Thm}	\label{char-arb}
Let $\HH\in\{\FF,\GG\bigr\}$ and $\varrho$ a (possibly infinite-valued) $\HH$-stopping time. Then, on the time horizon $\left[0,\varrho\right]$, the following hold:
\begin{itemize}
\item[(i)]
NA1 (or, equivalently, NUPBR) holds in the financial market $\M^{\HH}$ if and only if there exists a local martingale deflator in $\HH$;
\item[(ii)]
NFLVR holds in the financial market $\M^{\HH}$ if and only if there exists an Equivalent Local Martingale Measure in $\HH$.
\item[(iii)]
NFLVR holds in the financial market $\M^{\HH}$ if and only if both NA1 (or, equivalently, NUPBR) and NA hold in the financial market $\M^{\HH}$.
\end{itemize}
\end{Thm}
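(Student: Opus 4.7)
The strategy is to invoke the corresponding results from the cited references, applied to the stopped process $S^{\varrho}$ viewed on the filtered probability space $(\Omega,\F,\HH,P)$. As a preliminary observation, since $S$ is a continuous $\FF$-semimartingale and the $(H')$-hypothesis ensures that it is also a $\GG$-semimartingale, the stopped process $S^{\varrho}$ is in either case a continuous $\HH$-semimartingale and, in particular, locally bounded; this is precisely what allows the FTAP machinery to be applied without further adaptation.

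For assertion (i), I would appeal to Theorem 4 of \citet{Kar}, which provides the equivalence between NA1, NUPBR, and the existence of a local martingale deflator, and verify that the argument extends from a finite deterministic horizon to the possibly infinite $\HH$-stopping time $\varrho$. In the forward direction, Kardaras' construction of a supermartingale numeraire rests on a compactness argument in $L^0_+$ applied to the set $\{V(1,\theta)_{\infty}:\theta\in\A^{\HH}_1\}$; this set is well-defined because Definition \ref{def-adm} requires the terminal value $(\theta\cdot S)_{\infty}$ to exist. The reverse direction is standard: if $L$ is a deflator, then $L\,V(1,\theta)$ is a non-negative $\HH$-supermartingale for every $\theta\in\A^{\HH}_1$, and Fatou's lemma yields the boundedness in probability of the terminal wealths, which by Proposition 1 of \citet{Kar} is equivalent to NA1.

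For assertions (ii) and (iii), I would invoke the FTAP of \citet{DS1}. Since $S^{\varrho}$ is locally bounded, Corollary 1.2 of \citet{DS1} yields (ii) directly, as the equivalent $\sigma$-martingale measure produced there is automatically an ELMM in the locally bounded setting. For (iii), the implication NFLVR $\Rightarrow$ NA + NA1 is immediate, since NFLVR trivially implies NA and also implies NUPBR (equivalently NA1). For the converse, I would combine Corollary 3.4 with Proposition 3.6 of \citet{DS1}: Proposition 3.6 characterises NFLVR as NA together with the boundedness in probability of $\{V(1,\theta)_{\infty}:\theta\in\A^{\HH}_1\}$, and by Proposition 1 of \citet{Kar} this boundedness condition coincides exactly with NA1.

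The main obstacle in carrying this program through is the infinite-horizon extension of the Karatzas--Kardaras result used in (i). Concretely, one must verify that the deflator can be chosen so that $L_\infty>0$ $P$-a.s., as demanded in Definition \ref{defl}(i), and that the compactness and closure arguments employed in the proof remain valid when wealth processes are indexed by a possibly infinite time parameter. These points are absorbed into the requirement built into Definition \ref{def-adm} that $(\theta\cdot S)_{\infty}$ exist as an $\F$-measurable limit, but this verification is the delicate point in the otherwise routine assembly of the three assertions.
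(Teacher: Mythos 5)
Your proposal is correct and follows essentially the same route as the paper, which likewise establishes the three assertions by citing Theorem 4 of \citet{Kar} (noting the extension to the infinite horizon) for \emph{(i)} and Corollary 1.2, respectively Corollary 3.4 together with Proposition 3.6, of \citet{DS1} for \emph{(ii)} and \emph{(iii)}, with continuity of $S^{\varrho}$ supplying the local boundedness needed for the ELMM formulation. Your additional remarks on the $L^0_+$-compactness argument and on $L_{\infty}>0$ are consistent with the paper's parenthetical observation that the proof carries over to the infinite time horizon case.
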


In particular, the above theorem implies that NFLVR holds in the financial market $\M^{\HH}$ on the time horizon $[0,\varrho]$ if and only if there exists a local martingale deflator $L$ in $\HH$ such that $L^{\varrho}$ is a uniformly integrable $\HH$-martingale. We shall always work under the following standing assumption, which ensures that the restricted financial market $\M^{\FF}$ does not allow for any kind of arbitrage\footnote{We want to point out that analogous results can be obtained if in Assumption \ref{ass-NFLVR} the NFLVR condition is replaced with the weaker NA1 (or, equivalently, NUPBR) condition (on this regard, see also the discussion in Section \ref{S7}). We also want to make the reader aware of the fact that Assumption \ref{ass-NFLVR} excludes the classical \citet{BS} model with a non-zero drift coefficient on an infinite time horizon, see e.g. Example 1.7.6 in \citet{KS}.}.

\begin{Ass}	\label{ass-NFLVR}
The restricted financial market $\M^{\FF}$ satisfies NFLVR.
\end{Ass}

We aim at studying the no-arbitrage-type properties (or the lack thereof) of the enlarged financial market $\M^{\GG}$. In the remaining part of the paper, we shall give a clear answer to this issue under the two following standing assumptions, where we denote by $M=\left(M_t\right)_{t\geq 0}$ the $\FF$-local martingale part in the canonical decomposition of the semimartingale $S$ in the filtration $\FF$.

\begin{Ass}	\label{ass-avoid}
The random time $\tau$ avoids all $\FF$-stopping times: for every $\FF$-stopping time $T$ we have $P\left(\tau=T\right)=0$.
\end{Ass}

\begin{Ass}	\label{ass-PRP}
The continuous $\FF$-local martingale $M=\left(M_t\right)_{t\geq 0}$ has the $\FF$-predicta\-ble representation property in the filtration $\FF$. 
\end{Ass}

Assumption \ref{ass-avoid} is classical when dealing with progressive enlargements of filtrations (see e.g. \citet{JYC}, Section 5.9.4). Assumption \ref{ass-PRP} means that any $\FF$-local martingale $U=(U_t)_{t\geq 0}$ with $U_0=0$ can be represented as $U=\varphi\cdot M$, where $\varphi=(\varphi_t)_{t\geq 0}$ is an $\R^d$-valued $\FF$-predictable process such that $\int_0^t\varphi_s\,\!'d\langle M,M\rangle_s\varphi_s<\infty$ $P$-a.s. for all $t\geq 0$, see e.g. Chapter III of \citet{JacShi}. In particular, Assumption \ref{ass-PRP} implies that all $\FF$-martingales are continuous. We postpone to Section \ref{S5} a discussion of the importance of Assumptions \ref{ass-NFLVR}-\ref{ass-PRP} and of possible extensions thereof (compare also with Example \ref{rem-counterexample}).

\begin{Rem}[On the completeness of the restricted financial market]	\label{complete}
Under the additional assumption that the initial $\sigma$-field $\F_0$ is trivial, Assumptions \ref{ass-NFLVR} and \ref{ass-PRP} together imply that there exists a unique ELMM$_{\FF}$ $Q$ for the restricted financial market $\M^{\FF}$, see e.g. Theorem 9.5.3.1 of \citet{JYC}. In turn, this implies that for any non-negative $\F$-measurable random variable $\xi\in L^1\left(Q\right)$ there exists a strategy $\theta^{\xi}\in\A^{\FF}$ such that $\xi=x+(\theta^{\xi}\!\cdot\!S)_{\infty}$ $P$-a.s., for some $x\in\R$ and where $\theta^{\xi}\cdot S$ is a uniformly integrable $\left(Q,\FF\right)$-martingale (see e.g. \citet{DS1}, Theorem 5.2, or \citet{AS}, Théorème 3.2).
\end{Rem}

We close this section by recalling two technical results obtained by \citet{NY} under the hypothesis that all $\FF$-local martingales are continuous and Assumption \ref{ass-avoid} holds. Recall also that a $P$-a.s. finite random time $\tau$ is an honest time if and only if it is the end of an $\FF$-optional set (see \citet{Jeu}, Proposition 5.1) and note that, due to Assumption \ref{ass-PRP} together with the continuity of $S$, the $\FF$-optional sigma field coincides with the $\FF$-predictable sigma field. In the following, we denote by $Z=\left(Z_t\right)_{t\geq 0}$ the \emph{Azéma supermartingale} of the random time $\tau$, i.e., $Z_t=P\left(\tau>t|\F_t\right)$ for all $t\geq 0$.

\begin{Lem}[\citet{NY}, Theorem 4.1]	\label{NikYor}
There exists a continuous non-negative $\FF$-local martingale $N=(N_t)_{t\geq 0}$ with $N_0=1$ and $\lim_{t\rightarrow\infty}N_t=0$ $P$-a.s. such that $Z$ admits the following multiplicative decomposition, for all $t\geq 0$:
$$
Z_t = P\left(\tau>t|\F_t\right) = \frac{N_t}{N^*_t}
$$
where $N^*_t:=\sup_{s\leq t}N_s$. Furthermore, we have that:
$$
\tau = \sup\left\{t\geq 0:N_t=N^*_t\right\} = \sup\left\{t\geq 0:N_t=N^*_{\infty}\right\}.
$$
\end{Lem}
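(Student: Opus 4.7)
My plan is to obtain the multiplicative decomposition of $Z$ by combining the Doob--Meyer decomposition with the characteristic honest-time property that the predictable increasing process in that decomposition grows only on $\{Z=1\}$. First I would apply the Doob--Meyer decomposition to the bounded supermartingale $Z$, writing $Z = \mu - A$ where $\mu$ is a uniformly integrable $\FF$-martingale and $A$ is an $\FF$-predictable increasing process with $A_0=0$. Continuity of $\mu$ follows from Assumption \ref{ass-PRP}, while continuity of $A$ follows from Assumption \ref{ass-avoid}, since $A$ is the dual $\FF$-predictable projection of $\ind_{\lsi\tau,\infty\rsir}$ and $\tau$ avoids all $\FF$-stopping times. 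Integrability of $A_\infty$ comes from $E[A_\infty]=E[Z_0]=1$, using that $Z_\infty=0$ $P$-a.s.\ because $\tau$ is $P$-a.s.\ finite.

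The main conceptual step is to exploit the honest-time structure to show that the random measure $dA$ is supported on $\{Z=1\}$. The key facts are that $Z_\tau=1$ $P$-a.s.\ on $\{\tau<\infty\}$ and that one has the ``last passage'' identity $\tau=\sup\{t\geq 0:Z_t=1\}$, which together express that $Z$ can decrease only after having previously reached the level $1$. This is the only point at which the honest character of $\tau$ is fully used, and it is the main technical obstacle in the argument.

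Once this is in hand, I would define
$$
N_t := Z_t\,e^{A_t}, \qquad t\geq 0,
$$
and apply It\^o's formula to the continuous semimartingales $Z$ and $e^{A}$ (the latter being of finite variation):
$$
dN_t = e^{A_t}\,d\mu_t - e^{A_t}\,dA_t + Z_t\,e^{A_t}\,dA_t = e^{A_t}\,d\mu_t + (Z_t-1)\,e^{A_t}\,dA_t.
$$
The last term vanishes by the support property of $dA$, so $N$ is a continuous non-negative $\FF$-local martingale with $N_0=1$. To identify $N^*$, note that $Z\leq 1$ gives $N\leq e^A$, hence $N^*_t\leq e^{A_t}$; conversely, by continuity of $A$ and the support property, there exists for every $t$ some $s\leq t$ with $Z_s=1$ and $A_s=A_t$ (the last point before $t$ in the closed set $\{Z=1\}$), and at that point $N_s=e^{A_s}=e^{A_t}$. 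Hence $N^*_t=e^{A_t}$ and the multiplicative decomposition $Z_t=N_t/N^*_t$ follows. The asymptotic behavior $\lim_{t\to\infty}N_t=0$ $P$-a.s.\ is then a consequence of $Z_t\to 0$ $P$-a.s.\ and $N^*_\infty=e^{A_\infty}<\infty$ $P$-a.s.

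It remains to identify $\tau$. Since $N^*>0$, one has $\{N=N^*\}=\{Z=1\}$, and the honest-time representation $\tau=\sup\{t\geq 0:Z_t=1\}$ therefore yields $\tau=\sup\{t\geq 0:N_t=N^*_t\}$. Finally, because $N_t\to 0$ while $N^*_t\nearrow N^*_\infty<\infty$, the global maximum $N^*_\infty$ is attained and both running and global maxima coincide for all $t$ large enough, so their last hitting times agree, giving $\sup\{t\geq 0:N_t=N^*_t\}=\sup\{t\geq 0:N_t=N^*_\infty\}$, which completes the proof.
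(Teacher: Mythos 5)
The paper does not prove this lemma at all: it is quoted from \citet{NY} (Theorem 4.1), so your attempt has to be measured against that original argument. What you write is essentially a reconstruction of the same route: Doob--Meyer decomposition $Z=\mu-A$ with $\mu$ and $A$ continuous (Assumptions \ref{ass-PRP} and \ref{ass-avoid}), the fact that $dA$ is carried by $\{Z=1\}$, the definition $N:=Z\,e^{A}$, and the Skorokhod-reflection-type identification $e^{A}=N^{*}$; this is precisely the argument the paper itself re-runs in the concrete situation of Example \ref{last passage}. The steps you actually carry out are correct, including the final identity $\sup\{t\geq0:N_t=N^*_t\}=\sup\{t\geq0:N_t=N^*_{\infty}\}$, obtained by looking at the first time the running maximum reaches $N^*_{\infty}$. (You also use implicitly that $Z_0=1$, needed both for $N_0=1$ and for the set $\{s\leq t:Z_s=1\}$ to be nonempty; this follows from Assumption \ref{ass-avoid} applied to the stopping time $0$.)

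The soft spot is exactly the step you yourself call the main technical obstacle: you assert, without proof, that $Z_{\tau}=1$ $P$-a.s.\ and that $\tau=\sup\{t\geq0:Z_t=1\}$, and you do not say how these yield the support property of $dA$. That last deduction is in fact a one-liner you could have included: since $Z$ is continuous, $\{Z\neq 1\}$ is predictable, and the defining property of the dual predictable projection of $\ind_{\{\tau\leq t\}}$ gives $E\bigl[\int_0^{\infty}\ind_{\{Z_s\neq 1\}}\,dA_s\bigr]=P(Z_{\tau}\neq 1)$, so the support property is equivalent to $Z_{\tau}=1$ $P$-a.s.; and $Z_{\tau}=1$ itself follows from the last-passage identity because $\{t\geq0:Z_t=1\}$ is closed and $\tau<\infty$ $P$-a.s. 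Everything therefore hinges on the single fact $\tau=\sup\{t\geq0:Z_t=1\}$, which is where honesty (combined with continuity of all $\FF$-martingales and the avoidance assumption) really enters and which is not trivial; it is a standard result of the honest-times literature (\citet{Jeu}, \citet{NY}), so invoking it is defensible for a lemma the paper itself only cites, but a self-contained proof would have to establish it, and as written your argument is correct only modulo that borrowed fact.
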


\begin{Lem}[\citet{NY}, Proposition 2.5]	\label{dec}
Every $\FF$-local martingale $X=\left(X_t\right)_{t\geq 0}$ has the following canonical decomposition as a semimartingale in $\GG$:
$$
X_t = \tildeX_t+\int_0^{t\wedge\tau}\frac{d\langle X,N\rangle_{\!s}}{N_s}
-\int_{\tau}^{t\vee\tau}\frac{d\langle X,N\rangle_{\!s}}{N^*_{\infty}-N_s}
$$
where $\tildeX=(\tildeX_t)_{t\geq 0}$ is a $\GG$-local martingale and $N=(N_t)_{t\geq 0}$ is as in Lemma \ref{NikYor}.
\end{Lem}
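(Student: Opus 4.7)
The plan is to start from the classical Jeulin--Yor decomposition formula for honest times and then substitute the multiplicative decomposition $Z_t = N_t/N^*_t$ provided by Lemma \ref{NikYor}. More precisely, under the standing assumption that all $\FF$-local martingales are continuous together with Assumption \ref{ass-avoid}, the Azéma supermartingale $Z$ is continuous and its Doob--Meyer decomposition has the form $Z = m - A^p$, with $A^p$ a continuous increasing process and $m$ a continuous $\FF$-local martingale. The classical Jeulin--Yor theorem (see \citet{Jeu}, Chapitre V) then asserts that every $\FF$-local martingale $X$ admits the $\GG$-semimartingale decomposition
\[
X_t \;=\; \tildeX_t \;+\; \int_0^{t\wedge\tau}\frac{d\langle X,m\rangle_s}{Z_s} \;-\; \int_\tau^{t\vee\tau}\frac{d\langle X,m\rangle_s}{1-Z_s},
\]
where $\tildeX$ is a $\GG$-local martingale; the two integrals correspond to the standard drift correction on $[0,\tau]$ and to the drift correction on $(\tau,\infty)$ specific to honest times.

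The next step is to translate $d\langle X,m\rangle$ into a quantity involving $N$. Since $N$ is continuous and $N^*$ is continuous and of finite variation, Itô's formula applied to $Z = N/N^*$ yields
\[
dZ_t \;=\; \frac{dN_t}{N^*_t} \;-\; \frac{N_t}{(N^*_t)^2}\,dN^*_t,
\]
where the first term is a local martingale and the second is continuous of finite variation. Identifying this with the Doob--Meyer decomposition gives $dm_t = dN_t/N^*_t$, and hence $d\langle X,m\rangle_t = d\langle X,N\rangle_t/N^*_t$.

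Substituting this into the Jeulin--Yor formula handles the two pieces separately. On $[0,t\wedge\tau]$, using $Z_s = N_s/N^*_s$,
\[
\int_0^{t\wedge\tau}\frac{d\langle X,m\rangle_s}{Z_s} \;=\; \int_0^{t\wedge\tau}\frac{(N^*_s)^{-1}\,d\langle X,N\rangle_s}{N_s/N^*_s} \;=\; \int_0^{t\wedge\tau}\frac{d\langle X,N\rangle_s}{N_s}.
\]
For the integral on $(\tau,t\vee\tau]$, I would invoke the second identity in Lemma \ref{NikYor}, namely $\tau = \sup\{s\geq 0:N_s=N^*_\infty\}$, which forces $N^*_s = N^*_\infty$ for all $s\geq\tau$. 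Consequently, on $(\tau,\infty)$,
\[
1-Z_s \;=\; \frac{N^*_\infty-N_s}{N^*_\infty}, \qquad \frac{d\langle X,m\rangle_s}{1-Z_s} \;=\; \frac{(N^*_\infty)^{-1}\,d\langle X,N\rangle_s}{(N^*_\infty - N_s)/N^*_\infty} \;=\; \frac{d\langle X,N\rangle_s}{N^*_\infty - N_s},
\]
which gives the stated formula after combining the two pieces.

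The main technical obstacle is the $(\tau,\infty)$ part of the Jeulin--Yor formula: the drift correction with denominator $1-Z$ is not a feature of general progressive enlargements but relies crucially on $\tau$ being honest, i.e., on $\tau$ being the end of an $\FF$-optional set. The cleanest justification proceeds via Azéma's projection formulas for the sets $\lsi 0,\tau\rsi$ and $\lsir\tau,\infty\rsir$, using that the optional projection of $\ind_{\lsir\tau,\infty\rsir}$ is $1-Z$ and that $\tau$ avoids $\FF$-stopping times (ensuring no point masses in the dual predictable projection). Once this general formula is granted, the rest of the argument is a direct computation based on the multiplicative decomposition and on the identity $N^*_s = N^*_\infty$ for $s\geq\tau$ provided by Lemma \ref{NikYor}.
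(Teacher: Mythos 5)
Your derivation is correct; note that the paper does not prove this lemma at all but simply quotes it from \citet{NY} (Proposition 2.5), and your argument—combining the classical Jeulin--Yor/Barlow decomposition for honest times with the multiplicative decomposition $Z=N/N^*$, the identification $dm_t=dN_t/N^*_t$ via It\^o's formula and uniqueness of the Doob--Meyer decomposition, and the constancy $N^*_s=N^*_\infty$ for $s\geq\tau$—is essentially the standard proof given in that reference. The only point worth making explicit is that the general honest-time formula involves $Z_{s-}$ and the martingale part of the optional decomposition of $Z$, which coincide with $Z_s$ and the Doob--Meyer martingale part here because Assumptions \ref{ass-avoid}--\ref{ass-PRP} make $Z$ continuous; you implicitly use this, and it is harmless under the paper's standing hypotheses.
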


In the specific case of Example \ref{example}, the local martingale $N$ appearing in Lemma \ref{NikYor} is equal to $S/S_0$ itself (see Example \ref{rem-BS}). In our context, the importance of Lemma \ref{NikYor} consists in the possibility of reducing the study of the existence of arbitrage profits in a general enlarged financial market $\M^{\GG}$ to the simple situation considered in Example \ref{example}, as will be shown in the remaining part of the paper (in particular, see Examples \ref{last passage} and \ref{drawdown}).

\section{Existence and properties of local martingale deflators in $\GG$}	\label{S3}

In view of Theorem \ref{char-arb}, local martingale deflators play a fundamental role in characterising the validity of the NA1 and NFLVR conditions. The goal of the present section consists in studying the existence and the properties of local martingale deflators in the progressively enlarged filtration $\GG$. More specifically, in Section \ref{S3.1} we shall study local martingale deflators in $\GG$ on the random time horizon $[0,\tau]$, while in Section \ref{S3.2} we shall restrict our attention to the random time horizon $[0,\sigma\wedge\tau]$, for an arbitrary $\FF$-stopping time $\sigma$. Finally, Section \ref{S3.3} deals with the existence of local martingale deflators in $\GG$ on the global time horizon $[0,\infty]$.

Without any loss of generality, we may and do assume that $P$ is already an ELMM$_{\FF}$ for the restricted financial market $\M^{\FF}$. Indeed, Assumption \ref{ass-NFLVR} together with part \emph{(ii)} of Theorem \ref{char-arb} ensures the existence of an ELMM$_{\FF}$ $Q$. Since $Q\sim P$, it is easy to verify that all the properties of the general setting described in Section \ref{S2} still hold under $Q$. More precisely, the random time $\tau$ is still an honest time which avoids $\FF$-stopping times under any ELMM$_{\FF}$ $Q$ and the $\left(Q,\FF\right)$-local martingale $S=\left(S_t\right)_{t\geq 0}$ has the $\FF$-predictable representation property under the measure $Q$ (see \citet{JacShi}, part \emph{a)} of Theorem III.5.24). Hence, Assumptions \ref{ass-NFLVR}, \ref{ass-avoid} and \ref{ass-PRP} hold under every ELMM$_{\FF}$ $Q$. Finally, observe that the notion of admissible strategy given in Definition \ref{def-adm} is stable under an equivalent change of measure (see e.g. \citet{JacShi}, Proposition III.6.24). As a consequence, all the NA1, NA and NFLVR no-arbitrage-type conditions introduced in Definition \ref{def-arb} hold for the enlarged financial market $\M^{\GG}$ under the measure $P$ if and only if they hold under the measure $Q$. 

\subsection{Local martingale deflators in $\GG$ on the time horizon $[0,\tau]$}	\label{S3.1}

Note first that, due to Assumption \ref{ass-PRP}, the $\FF$-local martingale $N$ appearing in Lemma \ref{NikYor} admits the stochastic integral representation $N=1+\varphi\cdot S$, where $\varphi=(\varphi_t)_{t\geq 0}$ is an $\R^d$-valued $\FF$-predictable process in $L^{\FF}\left(S\right)$. Furthermore, by Lemma \ref{dec}, the stopped process $S^{\tau}$ admits the following canonical decomposition in the filtration $\GG$:
\be	\label{candec}
S^{\tau}_t = \tildeS_t^{\tau}+\int_0^{t\wedge\tau}\frac{d\langle S,N\rangle_{\!s}}{N_s}
= \tildeS^{\tau}_t+\int_0^{t\wedge\tau}\!d\langle S,S\rangle_{\!s}\frac{\varphi_{s}}{N_s}
= \tildeS^{\tau}_t+\int_0^td\langle \tildeS^{\tau},\tildeS^{\tau}\rangle_{\!s}\frac{\varphi_{s}}{N_s}
\ee
where $\tildeS=(\tildeS_t)_{t\geq 0}$ is a continuous $\GG$-local martingale. 

\begin{Prop}	\label{exdefl}
The process $1/N^{\tau}=\left(1/N_{t\wedge\tau}\right)_{t\geq 0}$ is a local martingale deflator in $\GG$ on the time horizon $\left[0,\tau\right]$. Furthermore, the process $1/N^{\tau}$ fails to be a uniformly integrable $\GG$-martingale.
\end{Prop}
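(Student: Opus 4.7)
The plan is to first verify that $1/N^{\tau}$ is well-defined (i.e.\ $N$ does not vanish on $[0,\tau]$), then show it is a $\GG$-local martingale whose product with $S^{\tau}$ is also a $\GG$-local martingale, and finally compute $E[1/N^{\tau}_{\infty}]$ explicitly to rule out uniform integrability.

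For well-definedness, let $T_0:=\inf\{t\ge 0:N_t=0\}$. Since $N$ is a continuous non-negative $\FF$-local martingale starting from $1$, it is absorbed at $0$ after $T_0$, so $N^*_{\infty}=N^*_{T_0-}$. Because $N_{\tau}=N^*_{\infty}\ge N_0=1>0$, the maximum cannot be achieved at $T_0$ (where $N=0$), so $\tau<T_0$ whenever $T_0<\infty$. Alternatively, since $T_0$ is an $\FF$-stopping time, Assumption \ref{ass-avoid} gives $P(\tau=T_0)=0$, and the two facts together yield $N_t>0$ for all $t\in[0,\tau]$ $P$-a.s.

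To prove that $1/N^{\tau}$ is a deflator, apply Lemma \ref{dec} to $N$ itself: stopped at $\tau$, the decomposition reads $N^{\tau}=\tilde{N}^{\tau}+\int_0^{\cdot\wedge\tau}d\langle N,N\rangle_s/N_s$, where $\tilde{N}$ is a continuous $\GG$-local martingale. Itô's formula applied to $x\mapsto 1/x$ (with $N^{\tau}$) cancels the drift contribution, yielding
\[
d\bigl(1/N^{\tau}_t\bigr)=-\frac{1}{N_t^2}\,d\tilde{N}^{\tau}_t,
\]
so $1/N^{\tau}$ is a strictly positive $\GG$-local martingale with $1/N^{\tau}_0=1$. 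For the product $(1/N^{\tau})S^{\tau}$, integration by parts combined with (\ref{candec}) gives three drift contributions: from $(1/N^{\tau})\,dS^{\tau}$ one obtains $(\varphi_t/N_t^2)\,d\langle S,S\rangle_t\mathbf{1}_{\{t\le\tau\}}$, while the covariation $d\langle 1/N^{\tau},S^{\tau}\rangle_t=-(1/N_t^2)\,d\langle\tilde{N}^{\tau},\tildeS^{\tau}\rangle_t=-(\varphi_t/N_t^2)\,d\langle S,S\rangle_t\mathbf{1}_{\{t\le\tau\}}$ (using $\langle\tilde{N},\tildeS\rangle=\langle N,S\rangle$ and $N=1+\varphi\cdot S$) exactly cancels it. Hence $(1/N^{\tau})S^{\tau}$ is a $\GG$-local martingale, as required; the bookkeeping between the two $\GG$-canonical decompositions is the main (but routine) computational step.

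For the failure of uniform integrability, observe that $1/N^{\tau}_{\infty}=1/N_{\tau}=1/N^{*}_{\infty}$ by Lemma \ref{NikYor}. Doob's maximal identity for the continuous non-negative $\FF$-local martingale $N$ with $N_0=1$ and $N_{\infty}=0$ yields $P(N^{*}_{\infty}>a)=1/a$ for $a\ge 1$, so $1/N^{*}_{\infty}$ is uniformly distributed on $(0,1]$ and in particular $E[1/N^{\tau}_{\infty}]=1/2\ne 1=1/N^{\tau}_0$. A strictly positive $\GG$-local martingale with $E[X_{\infty}]<X_0$ cannot be a uniformly integrable martingale (nor even a true martingale on $[0,\infty]$), completing the proof.
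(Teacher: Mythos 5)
Your proof is correct and follows essentially the same route as the paper: check well-definedness via $\tau<\nu:=\inf\{t:N_t=0\}$, apply It\^o's formula and the $\GG$-canonical decomposition of the relevant processes (Lemma \ref{dec}) to show $1/N^{\tau}$ is a $\GG$-local martingale, then verify by integration by parts that the covariation term cancels the drift of $S^{\tau}$, and finally compute $E[1/N_{\tau}]<1$. The one genuine variation is in the last step: the paper uses the elementary inequality $E[1/N_{\tau}]<1$ based on $N_{\tau}\geq1$ $P$-a.s.\ together with $P(N_{\tau}>1)>0$, whereas you invoke Doob's maximal identity (valid here since $N_0=1$ is deterministic and $N_\infty=0$) to obtain the sharper identity $E[1/N^{\tau}_{\infty}]=1/2$; both arguments are sound, with the paper's requiring less machinery and yours giving the exact value. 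Two small presentational remarks: your well-definedness argument already yields $\tau<T_0$ pathwise by continuity (the maximum cannot be attained at $T_0$ nor as a left limit there), so the invocation of Assumption \ref{ass-avoid} is redundant rather than a needed second ``fact''; and the integration-by-parts expansion has two non-martingale contributions that cancel (the third term $S^{\tau}\cdot d(1/N^{\tau})$ is already a local martingale), rather than ``three drift contributions''.
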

\begin{proof}
We define the $\FF$-stopping time $\nu:=\inf\left\{t\geq 0:N_t=0\right\}=\inf\left\{t\geq 0:Z_t=0\right\}$, where the process $Z=(Z_t)_{t\geq 0}$ is the Azéma supermartingale associated to $\tau$. Noting that $P\left(\tau>\nu|\F_{\nu}\right)=Z_{\nu}=0$, we get $\tau\leq\nu$ $P$-a.s. Since $\tau$ avoids all $\FF$-stopping times (Assumption \ref{ass-avoid}), we furthermore have $\tau<\nu$ $P$-a.s. Thus, the process $1/N^{\tau}$ is well-defined. By It\^o's formula together with equation \eqref{candec}:
\be	\label{1/N}	\ba
\frac{1}{N^{\tau}} &= 1-\frac{1}{\left(N^{\tau}\right)^2}\cdot N^{\tau}+\frac{1}{\left(N^{\tau}\right)^3}\cdot\langle N\rangle^{\tau}
= 1-\frac{\varphi}{\left(N^{\tau}\right)^2}\cdot S^{\tau}
+\frac{\varphi}{\left(N^{\tau}\right)^3}\cdot\langle S^{\tau},N\rangle	\\
&= 1-\frac{\varphi}{\left(N^{\tau}\right)^2}\cdot\tildeS^{\tau}.
\ea	\ee
This shows that $1/N^{\tau}$ is a strictly positive continuous $\GG$-local martingale satisfying $1/N^{\tau}_0=1$ and $1/N^{\tau}_{\infty}=1/N_{\tau}>0$ $P$-a.s. Furthermore, using the product rule together with equations \eqref{candec}-\eqref{1/N}:
$$
\frac{S^{\tau}}{N^{\tau}}
= S_0+\frac{1}{N^{\tau}}\cdot S^{\tau}
+S^{\tau}\cdot\frac{1}{N^{\tau}}
+\Bigl\langle S^{\tau},\frac{1}{N^{\tau}}\Bigr\rangle
= S_0+\frac{1}{N^{\tau}}\cdot\tildeS^{\tau}
+S^{\tau}\cdot\frac{1}{N^{\tau}}\,.
$$
This shows that $1/N^{\tau}$ is a local martingale deflator in $\GG$ on the time horizon $\left[0,\tau\right]$. Being a positive $\GG$-local martingale, the process $1/N^{\tau}$ is also a $\GG$-supermartingale by Fatou's lemma. It is a uniformly integrable $\GG$-martingale if and only if $E\left[1/N^{\tau}_{\infty}\right]=E\left[1/N_{\tau}\right]=E\left[1/N_0\right]=1$. However, $N_{\tau}\geq 1$ $P$-a.s. and $P\left(N_{\tau}>1\right)>0$ imply that $E\left[1/N_{\tau}\right]<1$.
\end{proof}

Proposition \ref{exdefl} shows that there always exists at least one local martingale deflator in $\GG$ on the time horizon $\left[0,\tau\right]$, given by the reciprocal of the $\FF$-local martingale $N$ appearing in the multiplicative decomposition of the Azéma supermartingale $Z$ associated to the random time $\tau$ (see Lemma \ref{NikYor}).

\begin{Rems}
1) 
According to the terminology adopted in \citet{HS}, the process $1/N^{\tau}=\mathcal{E}\bigl(-\frac{\varphi}{N}\cdot\tildeS^{\tau}\bigr)$ represents the \emph{minimal martingale density} for the stopped process $S^{\tau}$ in the progressively enlarged filtration $\GG$, i.e., the density process of the \emph{minimal martingale measure} (when the latter exists).

2)
As long as all $\FF$-local martingales are continuous, the result of Proposition \ref{exdefl} can be readily extended to the case where $\tau$ is an arbitrary random time (i.e., not necessarily honest) satisfying Assumption \ref{ass-avoid}. Indeed, as shown in the proof of Proposition \ref{exdefl}, we have $Z>0$ on $\lsi0,\tau\rsi$ and, hence, the supermartingale $Z$ admits the multiplicative decomposition $Z=N/D$, where $N=(N_t)_{t\geq0}$ is an $\FF$-local martingale with $N^{\tau}>0$ $P$-a.s. and $D=(D_t)_{t\geq 0}$ is an $\FF$-predictable increasing process (see \citet{JacShi}, Theorem II.8.21). The same arguments as in the proof of Proposition \ref{exdefl} allow then to show that $1/N^{\tau}$ is a local martingale deflator in $\GG$ on $[0,\tau]$.
\end{Rems}

The next lemma describes the general structure of all local martingale deflators in $\GG$ on the time horizon $\left[0,\tau\right]$. Furthermore, it shows that all local martingale deflators in $\GG$ are \emph{strict} $\GG$-local martingales in the sense of \citet{DS3}, being $\GG$-local martingales which fail to be uniformly integrable $\GG$-martingales.

\begin{Lem}	\label{str-defl}
Let $L=(L_t)_{t\geq 0}$ be a local martingale deflator in $\GG$ on the time horizon $\left[0,\tau\right]$. Then $L$ admits the following representation:
$$
L = \frac{\mathcal{E}(R)}{N^{\tau}}
$$
where $R=\left(R_t\right)_{t\geq 0}$ is a $\GG$-local martingale with $R_0=0$, purely discontinuous on $\lsi0,\tau\rsi$ and with $\left\{\Delta R\neq 0\right\}\subseteq\lsi\tau\rsi$ and $\Delta R_{\tau}>-1$ $P$-a.s. Furthermore, all local martingale deflators in $\GG$ on the time horizon $\left[0,\tau\right]$ fail to be uniformly integrable $\GG$-martingales.
\end{Lem}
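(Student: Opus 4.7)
The plan is to set $K := L\,N^{\tau}$ and verify that $K$ is a strictly positive $\GG$-local martingale with $K_0=1$; then the required $R$ is simply its stochastic logarithm, $R:=\int dK/K_-$, which automatically gives $R_0=0$, $\mathcal{E}(R)=K$, and $L=\mathcal{E}(R)/N^{\tau}$, while the strict positivity of $K$ forces $\Delta R>-1$ everywhere.

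To show that $K$ is a $\GG$-local martingale, I decompose $N^\tau$ in $\GG$ via Lemma \ref{dec}: $N^\tau=\widetilde{N}^\tau+A$, with $\widetilde{N}^\tau=1+\varphi\cdot\tildeS^\tau$ continuous (by (\ref{candec}) applied to $N=1+\varphi\cdot S$) and $dA_s=\ind_{\{s\leq\tau\}}(\varphi_s^2/N_s)\,d\langle\tildeS^\tau\rangle_s$. Integration by parts, using continuity of $N^\tau$ so that $[L,N^\tau]=\langle L^c,\widetilde{N}^\tau\rangle$, yields
\[
d(LN^\tau)=L_-\,d\widetilde{N}^\tau+N^\tau_-\,dL+\bigl(L_-\,dA+d\langle L^c,\widetilde{N}^\tau\rangle\bigr).
\]
The deflator condition, i.e. that $LS^\tau$ is a $\GG$-local martingale, combined with (\ref{candec}), forces the drift identity $d\langle L^c,\tildeS^\tau\rangle=-L_-(\varphi/N)\,d\langle\tildeS^\tau\rangle$ on $\lsi 0,\tau\rsi$; multiplying by $\varphi$ shows the bracketed drift vanishes, so $K$ is indeed a $\GG$-local martingale.

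The jump and continuous structure of $R$ then follows from $\Delta R=\Delta L/L_-$ (since $N^\tau$ is continuous) together with $dR^c=d\widetilde{N}^\tau/N^\tau_-+dL^c/L_-$. I invoke the $\GG$-predictable representation property available under Assumptions \ref{ass-avoid}--\ref{ass-PRP}: every $\GG$-local martingale is of the form $L_0+\psi\cdot\tildeS+\chi\cdot(\ind_{\{t\geq\tau\}}-A^o_t)$, where $A^o$ is the $\GG$-dual predictable projection of $\ind_{\{t\geq\tau\}}$, continuous by Assumption \ref{ass-avoid}. This immediately yields $\{\Delta L\neq 0\}\subseteq\lsi\tau\rsi$ and $L^c=\psi\cdot\tildeS$; the drift identity pins down $\psi=-L_-(\varphi/N)$ on $\lsi 0,\tau\rsi$, and the two pieces $d\widetilde{N}^\tau/N^\tau_-=(\varphi/N)\,d\tildeS$ and $dL^c/L_-=-(\varphi/N)\,d\tildeS$ cancel on $\lsi 0,\tau\rsi$, so $R^c\equiv 0$ there.

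For the second assertion, $\mathcal{E}(R)=L\,N^\tau$ is a strictly positive $\GG$-local martingale, hence a $\GG$-supermartingale, so $E[L_\infty N_\tau]\leq 1$. By Lemma \ref{NikYor} one has $N_\tau=N^*_\infty$, and since $N$ is a non-constant continuous non-negative $\FF$-local martingale with $N_0=1$ and $N_\infty=0$ $P$-a.s., Doob's maximal identity gives $P(N^*_\infty>a)=1/a$ for $a\geq 1$, so $N_\tau\geq 1$ $P$-a.s.\ with $P(N_\tau>1)>0$. If $L$ were a uniformly integrable $\GG$-martingale, then $E[L_\infty]=1$ would give $E[L_\infty(N_\tau-1)]\leq 0$, contradicting $L_\infty>0$ $P$-a.s.\ together with $P(N_\tau>1)>0$. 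The main obstacle is the appeal to the $\GG$-predictable representation property in the third paragraph: it is the only non-trivial structural input beyond It\^o's formula, and it is what simultaneously forces the jump support $\{\Delta L\neq 0\}\subseteq\lsi\tau\rsi$ and the alignment of $L^c$ that makes $R^c$ vanish on $\lsi 0,\tau\rsi$.
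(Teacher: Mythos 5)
Your proof is correct in its conclusions but follows a genuinely different route from the paper. Where the paper invokes Schweizer's multiplicative decomposition result (Theorem 1 of \citet{Sch2}, or \citet{CS}) to factor $L=\mathcal{E}(-\frac{\varphi}{N}\cdot\tildeS^\tau)\,\mathcal{E}(R)$ with $\langle R,\tildeS^i\rangle^\tau=0$, and then combines Barlow's orthogonality lemma (to kill $R^c$ on $\lsi0,\tau\rsi$) with Jeulin's Th\'eor\`eme 5.12 (for the jump support), you instead \emph{(i)} verify by a direct integration-by-parts computation that $K=L N^\tau$ is a $\GG$-local martingale, taking $R$ as its stochastic logarithm, and \emph{(ii)} appeal to a $\GG$-martingale representation theorem to read off the support of $\Delta L$ and the form of $L^c$. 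Step \emph{(i)} is a nice self-contained substitute for Schweizer's theorem; the drift cancellation you write out (multiplying the deflator drift identity by $\varphi'$) is exactly right. The one point to tidy is that your stated $\GG$-PRP, $L_0+\psi\cdot\tildeS+\chi\cdot\bigl(\ind_{\lsi\tau,\infty\rsir}-A^o\bigr)$, is not the correct representation for an \emph{honest} time: as recorded in Remark \ref{rem-rep} via Theorem 6.2 of \citet{JS}, there is an additional purely-discontinuous component at $\tau$ coming from $\G_\tau$-measurable jumps (the $\eta$-term, with $E[\eta|\G_{\tau-}]=0$), which cannot be written as $\chi_\tau$ for $\GG$-predictable $\chi$. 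This does not damage your argument, because the two facts you extract — $\{\Delta L\neq0\}\subseteq\lsi\tau\rsi$ and $L^c=\psi\cdot\tildeS$ — remain valid under the correct representation (the extra piece is supported on $\lsi\tau\rsi$ and is purely discontinuous), but the citation should be to the precise statement. Your derivation of the final assertion via $E[L_\infty N_\tau]\le1$ matches the paper's in substance.
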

\begin{proof}
We already know from Proposition \ref{exdefl} that the set of all local martingale deflators in $\GG$ on the time horizon $\left[0,\tau\right]$ is non-empty. Let $L=\left(L_t\right)_{t\geq 0}$ be an element of that set. Theorem 1 of \citet{Sch2} (or also \citet{CS}, Théorème 2.2) together with equation \eqref{candec} shows that $L$ admits the representation
$$
L = \mathcal{E}\!\left(-\frac{\varphi}{N}\cdot\tildeS^{\tau}\right)\mathcal{E}(R)
$$
where $R=\left(R_t\right)_{t\geq 0}$ is a $\GG$-local martingale with $R_0=0$ and $\Delta R>-1$ $P$-a.s. such that $\langle R,\tildeS^i\rangle^{\tau}=0$ for all $i=1,\ldots,d$. The uniqueness of the Doléans-Dade exponential together with equation \eqref{1/N} implies that $1/N^{\tau}=\mathcal{E}\!\bigl(-\frac{\varphi}{N}\cdot\tildeS^{\tau}\bigr)$. Let $R=R^{c}+R^{d}$ be the decomposition of the $\GG$-local martingale $R$ into its continuous and purely discontinuous $\GG$-local martingale parts. Since $\langle R^{c}\!,\tildeS^i\rangle^{\tau}=\bigl\langle\left(R^{c}\right)^{\tau}\!\!,\tildeS^i\bigr\rangle=0$ for all $i=1,\ldots,d$ by orthogonality, Proposition 5.4 of \citet{Bar} implies that $\left(R^{c}\right)^{\tau}=0$, thus showing that $R$ is purely discontinuous on $\lsi0,\tau\rsi$. Furthermore, Théorème 5.12 of \citet{Jeu} implies that $\left\{\Delta R\neq 0\right\}\subseteq\lsi\tau\rsi$, since all $\FF$-local martingales are continuous. It remains to show that $L$ fails to be a uniformly integrable $\GG$-martingale. For that, it suffices to observe that:
$$
E\left[L_{\infty}\right]
= E\left[\frac{\mathcal{E}(R)_{\infty}}{N_{\tau}}\right]
< E\left[\mathcal{E}(R)_{\infty}\right] \leq 1
$$
where the first inequality follows since $N_{\tau}\geq 1$ $P$-a.s. and $P\left(N_{\tau}>1\right)>0$ and the last inequality is due to the supermartingale property of the positive $\GG$-local martingale $\mathcal{E}(R)$.
\end{proof}

\begin{Rem}	\label{rem-rep}
The structure of the $\GG$-local martingale $R$ appearing in Lemma \ref{str-defl} can be described a bit more explicitly by relying on the general martingale representation results recently established in \citet{JS} for progressively enlarged filtrations. Indeed, noting that the dual $\FF$-predictable projection of the process $(\ind_{\left\{\tau\leq t\right\}})_{t\geq 0}$ is given by $\bigl(\log\left(N^*_t\right)\bigr)_{t\geq 0}=\bigl(\int_0^t\!\!\frac{1}{N^*_s}\,dN^*_s\bigr)_{t\geq 0}$ (see \citet{NY}, Corollary 2.4, or also \citet{MY}, Exercise 1.8), Theorem 6.2 of \citet{JS} implies that the following representation holds true\footnote{Note that, as in Corollary III.4.27 of \citet{JacShi}, the martingale representation result obtained in Theorem 6.2 of \citet{JS} for bounded $\GG$-martingales extends naturally to all $\GG$-local martingales. The representation \eqref{rep-JS} then follows by Lemma \ref{str-defl} together with Theorem 6.2 of \citet{JS} and Theorem I.4.61 of \citet{JacShi}.}:
\be	\label{rep-JS}
L^{\tau} = \frac{1}{N^{\tau}}
\exp\left(-\frac{k}{N^*}\cdot N^*\right)
\left(1+k_{\tau}\ind_{\lsi\tau,\infty\rsir}+\eta\ind_{\lsi\tau,\infty\rsir}\right)
\ee
where $k=(k_t)_{t\geq 0}$ is an $\FF$-predictable process such that $\int_0^{\tau}\!\frac{|k_s|}{N^*_s}\,dN^*_s<\infty$ and $1+k_{\tau}>0$ $P$-a.s. and $\eta$ is a non-negative $\G_{\tau}$-measurable random variable with $E\left[\eta|\G_{\tau-}\right]=0$.
\end{Rem}

\subsection{Local martingale deflators in $\GG$ on the time horizon $[0,\sigma\wedge\tau]$}	\label{S3.2}

Let $\sigma$ be an arbitrary (possibly infinite-valued) $\FF$-stopping time. In this section, aiming at characterising the validity of NFLVR on $[0,\sigma\wedge\tau]$ (see Section \ref{S5}), we study the martingale property of local martingale deflators in $\GG$ on the time horizon $[0,\sigma\wedge\tau]$. 

For any $\FF$-stopping time $\sigma$, due to the proof of Lemma \ref{str-defl} together with Remark \ref{rem-rep}, every local martingale deflator $L=(L_t)_{t\geq 0}$ in $\GG$ on the time horizon $[0,\sigma\wedge\tau]$ admits the following representation when stopped at $\sigma\wedge\tau$:
\be	\label{str-defl-sigma}
L^{\sigma\wedge\tau} = \frac{1}{N^{\sigma\wedge\tau}}
\exp\left(-\frac{k\,\ind_{\lsi0,\sigma\rsi}}{N^*}\cdot N^*\right)
\Bigl(1+\ind_{\{\tau\leq\sigma\}}\!\left(k_{\tau}\ind_{\lsi\tau,\infty\rsir}+\eta\ind_{\lsi\tau,\infty\rsir}\right)\Bigr)
\ee
where $k=(k_t)_{t\geq 0}$ is an $\FF$-predictable process such that $\int_0^{\tau}\!\frac{|k_s|}{N^*_s}\,dN^*_s<\infty$ and $1+k_{\tau}>0$ $P$-a.s. and $\eta$ is a non-negative $\G_{\tau}$-measurable random variable with $E\left[\eta|\G_{\tau-}\right]=0$.

The proof of the following lemma (postponed to the Appendix) is technical and can be omitted on a first reading. Recall that $\nu=\inf\left\{t\geq 0:Z_t=0\right\}$, as in the proof of Proposition \ref{exdefl}. 

\begin{Lem}	\label{comp}
Let $\sigma$ be an $\FF$-stopping time and $L=\left(L_t\right)_{t\geq 0}$ a local martingale deflator in $\GG$ on the time horizon $\left[0,\sigma\wedge\tau\right]$. Then the following holds:
\be	\label{comp-a}
E\bigl[L_{\sigma\wedge\tau}\bigr] = 
E\left[1-\exp\left(-\int_0^{\tau}\frac{1+k_s}{N^*_s}\,dN^*_s\right)
\ind_{\left\{\nu\leq\sigma\right\}}\right]
\ee
where $k=\left(k_t\right)_{t\geq 0}$ is the $\FF$-predictable process appearing in the representation \eqref{str-defl-sigma} and $\int_0^{\tau}\!\frac{1+k_s}{N^*_s}\,dN^*_s>0$ $P$-a.s.
As a consequence, the stopped process $L^{\sigma\wedge\tau}$ is a uniformly integrable $\GG$-martingale if and only if $P(\nu\leq\sigma)=0$.
\end{Lem}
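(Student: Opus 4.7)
The plan is to compute $E[L_{\sigma\wedge\tau}]$ directly by evaluating the explicit representation \eqref{str-defl-sigma} at $t=\sigma\wedge\tau$, splitting on the disjoint events $\{\tau\leq\sigma\}$ and $\{\sigma<\tau\}$, and then reducing each piece to an $\FF$-computation: the first via the Nikeghbali--Yor identity from Remark \ref{rem-rep} (that $\log N^*$ is the $\FF$-dual predictable projection of $\ind_{\{\tau\leq\,\cdot\,\}}$), the second by conditioning on $\F_\sigma$. Throughout I abbreviate $A_t:=\int_0^t k_s/N_s^*\,dN_s^*$ and $U_t:=\log N_t^*+A_t=\int_0^t(1+k_s)/N_s^*\,dN_s^*$, so that $e^{-U_t}=e^{-A_t}/N_t^*$ and $dU=(1+k)/N^*\,dN^*$.

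On $\{\tau\leq\sigma\}$ one has $\sigma\wedge\tau=\tau$ and $(\ind_{\lsi\tau,\infty\rsir})(\tau)=1$, while on $\{\sigma<\tau\}$ the factor $\ind_{\{\tau\leq\sigma\}}$ annihilates the jump piece, so
\[
L_{\sigma\wedge\tau}=\frac{e^{-A_\tau}}{N_\tau}(1+k_\tau+\eta)\ind_{\{\tau\leq\sigma\}}+\frac{e^{-A_\sigma}}{N_\sigma}\ind_{\{\sigma<\tau\}}.
\]
The $\eta$ contribution vanishes in expectation: $\eta\ind_{\lsi\tau,\infty\rsir}$ is a purely discontinuous $\GG$-local martingale, and the $\FF$-predictable multiplier $(e^{-A}/N)\ind_{\lsi 0,\sigma\rsi}$ integrated against it produces a $\GG$-local martingale whose value at $\sigma\wedge\tau$ is $(e^{-A_\tau}/N_\tau)\,\eta\,\ind_{\{\tau\leq\sigma\}}$, and a localisation argument shows this has zero mean. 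For the surviving part of the first summand, I apply the dual predictable projection to the $\FF$-predictable process $K_s:=e^{-A_s}(1+k_s)(N_s^*)^{-1}\ind_{\{s\leq\sigma\}}$ and use $N_\tau=N_\tau^*$ together with $\mathrm{supp}(dN^*)\subseteq\lsi 0,\tau\rsi$ (which follows from $\tau=\sup\{t:N_t=N^*_\infty\}$) to obtain
\[
E\!\left[\frac{e^{-A_\tau}(1+k_\tau)}{N_\tau}\ind_{\{\tau\leq\sigma\}}\right]=E\!\int_0^{\sigma\wedge\tau}\!\!\!e^{-U_s}\,dU_s=E\bigl[1-e^{-U_{\sigma\wedge\tau}}\bigr],
\]
the last equality being the chain rule for the continuous non-decreasing process $U$. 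For the second summand, conditioning on $\F_\sigma$ and using $P(\tau>\sigma\mid\F_\sigma)=Z_\sigma=N_\sigma/N_\sigma^*$, together with $\{\sigma<\tau\}\subseteq\{\sigma<\nu\}$, yields $E[e^{-U_\sigma}\ind_{\{\sigma<\nu\}}]$.

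Combining the two pieces, I exploit $U_\sigma=U_\tau$ on $\{\tau\leq\sigma\}$ (since $dN^*$ is carried on $\lsi 0,\tau\rsi$) and the a.s.\ inequality $\tau<\nu$ (so $\{\tau\leq\sigma,\,\nu\leq\sigma\}=\{\nu\leq\sigma\}$); a short algebraic rearrangement then produces \eqref{comp-a}. The positivity $U_\tau>0$ $P$-a.s. follows because $1+k>0$, $dN^*\geq 0$, and $N_\tau>1$ $P$-a.s.\ (otherwise $N$ would be a bounded, hence true, martingale contradicting $N_0=1$ and $\lim_{t\to\infty}N_t=0$). The uniform-integrability statement is then immediate: $L^{\sigma\wedge\tau}$ is a non-negative $\GG$-local martingale, hence a $\GG$-supermartingale with $L_0=1$, so it is a uniformly integrable $\GG$-martingale iff $E[L_{\sigma\wedge\tau}]=1$, iff $E[e^{-U_\tau}\ind_{\{\nu\leq\sigma\}}]=0$, iff $P(\nu\leq\sigma)=0$, since $0<e^{-U_\tau}<1$ $P$-a.s. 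The step I expect to be the main obstacle is the careful measurability argument eliminating $\eta$: the indicator $\ind_{\{\tau\leq\sigma\}}$ is not manifestly $\G_{\tau-}$-measurable, so a direct tower-property argument fails and one must instead exhibit the relevant quantity as the terminal value of a stopped $\GG$-local martingale built from $\eta\ind_{\lsi\tau,\infty\rsir}$ and justify exchange of expectation and limit by localisation.
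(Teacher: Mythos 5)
Your argument follows the same skeleton as the paper's proof (split $E[L_{\sigma\wedge\tau}]$ on $\{\tau\leq\sigma\}$ and $\{\sigma<\tau\}$, treat the first piece with the dual predictable projection $\log N^*$ of $\ind_{\{\tau\leq\cdot\}}$, the second by projecting onto $\F_\sigma$ via $Z_\sigma=N_\sigma/N^*_\sigma$ on $\{\sigma<\nu\}$, then recombine using $\tau<\nu$ and the constancy of $N^*$ after $\tau$), but there is a genuine flaw at exactly the step you single out. The indicator $\ind_{\{\tau\leq\sigma\}}$ \emph{is} $\G_{\tau-}$-measurable: $\{\sigma<\tau\}=\bigcup_{q\in\mathbb{Q}_+}\bigl(\{\sigma<q\}\cap\{q<\tau\}\bigr)\in\G_{\tau-}$ (this is \citet{JacShi}, \S I.1.17, which the paper invokes), hence so is its complement; moreover $N_\tau$, $A_\tau$ and $k_\tau$ are values at $\tau$ of $\GG$-predictable processes and are therefore $\G_{\tau-}$-measurable by Proposition I.2.4 of \citet{JacShi}. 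So the direct tower-property argument you reject is precisely what works, and is what the paper does: conditioning on $\G_{\tau-}$ annihilates $\eta$ because $E[\eta\,|\,\G_{\tau-}]=0$. Your substitute -- that the integral of $(e^{-A}/N)\ind_{\lsi0,\sigma\rsi}$ against the local martingale $\eta\ind_{\lsi\tau,\infty\rsir}$ ``has zero mean by localisation'' -- is not a proof: a $\GG$-local martingale evaluated at a stopping time need not have zero expectation (quantifying exactly such failures is the content of this lemma and of Section \ref{S3}), and justifying the interchange of limit and expectation would need uniform integrability or domination that you do not supply; any honest completion of that route ends up using the same $\G_{\tau-}$-measurability you claimed was unavailable.

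Two smaller points on the positivity of $\int_0^{\tau}\frac{1+k_s}{N^*_s}\,dN^*_s$. You invoke ``$1+k>0$'' and ``$N_\tau>1$ $P$-a.s.''; only $1+k_\tau>0$ $P$-a.s. is given, and to know $1+k_s>0$ for $dN^*_s$-almost every $s$ (so that $U$ is nondecreasing) one must apply the projection identity to $\ind_{\{1+k\leq 0\}}$. Also, while $N_\tau=N^*_\infty>1$ a.s. is in fact true (Doob's maximal identity gives $N^*_\infty\overset{d}{=}1/V$ with $V$ uniform on $(0,1)$), your justification ``otherwise $N$ would be a bounded, hence true, martingale'' is not an argument on an event of positive probability. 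The paper sidesteps both issues by proving the strict positivity through one further application of the dual predictable projection, showing $E\bigl[\frac{1+k_\tau}{N^*_\tau}\ind_{\{\int_0^\tau(1+k_s)(N^*_s)^{-1}dN^*_s=0\}}\bigr]=0$ and using $\frac{1+k_\tau}{N^*_\tau}>0$ a.s. Apart from these points, your computation of the two summands and the final uniform-integrability equivalence coincide with the paper's proof and are correct.
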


\begin{Rem}[On martingales and strict local martingales]
Let $L=(L_t)_{t\geq 0}$ be a local martingale deflator in $\GG$ on the time horizon $\left[0,\tau\right]$. By Fatou's lemma, the strictly positive $\GG$-local martingale $L^{\tau}$ is also a $\GG$-supermartingale. As a consequence, $L^{\tau}$ is a (true) $\GG$-martingale if and only if it has constant expectation, i.e., $E\left[L_{t\wedge\tau}\right]=1$ for all $t\geq 0$. Due to Lemma \ref{comp} (with $\sigma=t$), the latter condition holds if and only if $P\left(\nu=\infty\right)=1$. This means that, as soon as $P\left(\nu<\infty\right)>0$, any local martingale deflator $L$ in $\GG$ on $[0,\tau]$ is a \emph{strict} $\GG$-local martingale in the sense of \citet{ELY}, being a $\GG$-local martingale which fails to be a (true) $\GG$-martingale. It is interesting to note that Lemma \ref{str-defl} gives then a recipe for constructing a whole class of possibly discontinuous \emph{strict} $\GG$-local martingales (in the sense of \citet{ELY}). To the best of our knowledge, apart from the particular cases considered in \citet{Chy} and \citet{KKN}, there exist very few non-trivial examples of strict local martingales that are not necessarily continuous.
\end{Rem}

\subsection{Local martingale deflators in $\GG$ on the time horizon $[0,\infty]$}	\label{S3.3}

Let us now consider the question of whether there exists a local martingale deflator in $\GG$ on the global time horizon $[0,\infty]$. As a preliminary, recall that Lemma \ref{dec} gives the following canonical decomposition of $S=\left(S_t\right)_{t\geq 0}$ in the enlarged filtration $\GG$:
\be	\label{candec-2}
S_t = \tildeS_t + \int_0^{t\wedge\tau}\frac{d\langle S,N\,\rangle_{\!s}}{N_s}
-\int_{\tau}^{t\vee\tau}\frac{d\langle S,N\,\rangle_{\!s}}{N^*_{\infty}-N_s}
= \tildeS_t + \int_0^td\!\langle\tildeS,\tildeS\rangle_{\!s}\,\tilde{\alpha}_{s}
=: \tildeS_t+\tildeA_t
\ee
where $\tildeS=(\tildeS_t)_{t\geq 0}$ is a $\GG$-local martingale, $\tilde{\alpha}_t:=\ind_{\left\{t\leq\tau\right\}}\varphi_t/N_t-\ind_{\left\{t>\tau\right\}}\varphi_t/(N^*_{\infty}-N_t)$ and where the process $\varphi=(\varphi_t)_{t\geq 0}\in L^{\FF}\left(S\right)\subset L^{\GG}(S)$ is the integrand in the stochastic integral representation $N=1+\varphi\cdot S$, with $N$ as in Lemma \ref{NikYor}.

\begin{Prop}	\label{defl-2}
The enlarged financial market $\M^{\GG}$ does not admit any local martingale deflator in $\GG$ on the global time horizon $\left[0,\infty\right]$.
\end{Prop}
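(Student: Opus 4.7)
The plan is to argue by contradiction: suppose that a local martingale deflator $L=(L_t)_{t\geq0}$ in $\GG$ on $[0,\infty]$ does exist. Integration by parts applied to $LS$ together with the canonical decomposition \eqref{candec-2} shows that $d(LS)=L_{-}\,d\tildeS+S_{-}\,dL+L_{-}\,d\tildeA+d\langle L^{c},\tildeS\rangle$, so the requirement that $LS$ be a $\GG$-local martingale forces the finite variation part $L_{-}\,d\tildeA+d\langle L^{c},\tildeS\rangle$ to vanish. Writing the Kunita--Watanabe decomposition $L=L_0+\psi\cdot\tildeS+L^{\perp}$, with $L^{\perp}$ strongly $\GG$-orthogonal to $\tildeS$, this forces $\psi=-L_{-}\tilde{\alpha}$ in the $d\langle\tildeS,\tildeS\rangle$-a.e.\ sense. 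The integrability of $\psi$ required for $\psi\cdot\tildeS$ to be a well-defined stochastic integral, together with the standard localisation along $\tau_n:=\inf\{t:L_t\leq 1/n\}$, then yields the structure condition
\[
\int_0^t\tilde{\alpha}_s'\,d\langle\tildeS,\tildeS\rangle_s\,\tilde{\alpha}_s<\infty\quad P\text{-a.s., for every }t\geq0.
\]

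To exploit this, I would isolate the post-$\tau$ contribution. Since $d\langle\tildeS,\tildeS\rangle=d\langle S,S\rangle$ and, by Assumption \ref{ass-PRP}, one has $N=1+\varphi\cdot S$ and hence $\varphi_s'\,d\langle S,S\rangle_s\,\varphi_s=d\langle N,N\rangle_s$, the definition of $\tilde{\alpha}$ in \eqref{candec-2} directly gives
\[
\int_\tau^{t}\tilde{\alpha}_s'\,d\langle\tildeS,\tildeS\rangle_s\,\tilde{\alpha}_s=\int_\tau^{t}\frac{d\langle N,N\rangle_s}{(N^{*}_{\infty}-N_s)^{2}},\qquad t>\tau.
\]

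The crux is to show that this last integral is $P$-a.s.\ infinite on $\{\tau<t\}$. I would invoke the Dambis--Dubins--Schwarz theorem: on a suitable enlargement there is a standard Brownian motion $B$ such that $N_s-1=B_{\langle N,N\rangle_s}$. By Lemma \ref{NikYor}, $N_{\tau}=N^{*}_{\infty}$, so by the strong Markov property applied at the stopping time $\langle N,N\rangle_\tau$, the process $W_u:=B_{\langle N,N\rangle_\tau}-B_{\langle N,N\rangle_\tau+u}$ is a Brownian motion issued from $0$, and $N^{*}_{\infty}-N_s=W_{\langle N,N\rangle_s-\langle N,N\rangle_\tau}$ for $s\geq\tau$. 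The change of variable $u=\langle N,N\rangle_s-\langle N,N\rangle_\tau$ thus converts the integral into $\int_0^{\langle N,N\rangle_t-\langle N,N\rangle_\tau}W_u^{-2}\,du$. The classical fact $\int_0^{\varepsilon}W_u^{-2}\,du=+\infty$ $P$-a.s.\ for every $\varepsilon>0$, which follows from the occupation time formula combined with the strict positivity and continuity in $x$ of the Brownian local time at $0$, then provides the desired divergence, provided $\langle N,N\rangle_t>\langle N,N\rangle_\tau$; the latter holds $P$-a.s.\ on $\{\tau<t\}$ because a continuous local martingale is constant on any interval where its bracket does not grow, whereas the strict inequality $N^{*}_{\infty}=N_\tau>N_s$ for every $s>\tau$ (a consequence of the definition of $\tau$ in Lemma \ref{NikYor}) rules out $N$ being constant on $[\tau,\infty)$.

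Since $\tau<\infty$ $P$-a.s., we may choose $t$ with $P(\tau<t)>0$, producing a set of positive measure on which the structure condition fails, contradicting the assumed existence of $L$. The most delicate point will be the justification of the Dambis--Dubins--Schwarz change of variable at the random time $\langle N,N\rangle_\tau$ --- in particular when $\langle N,N\rangle_\infty<\infty$, where the driving Brownian motion has to be extended past its natural time horizon on an enlarged probability space --- together with the careful verification that $W$ is a genuine Brownian motion issued from $0$ in the enlarged filtration.
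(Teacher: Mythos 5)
Your reduction to the structure condition is sound and is essentially the same structural input the paper uses (Schweizer's representation of martingale densities, which already presupposes $\tilde{\alpha}\cdot\tildeS$ to be well defined), and your identification of the post-$\tau$ contribution as $\int_\tau^t\frac{d\langle N,N\rangle_s}{(N^*_\infty-N_s)^2}$ is correct. The genuine gap is in the step that is supposed to make this integral explode: you apply the strong Markov property of the Dambis--Dubins--Schwarz Brownian motion $B$ at the time $\langle N,N\rangle_\tau$. But $\tau$ is an honest time, not an $\FF$-stopping time, so $\langle N,N\rangle_\tau$ is a last-passage (argmax) time for $B$ and not a stopping time of the time-changed filtration; the strong Markov property simply does not apply there. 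Moreover the conclusion you draw from it is false as stated: by Williams' path decomposition, the post-maximum increment $u\mapsto B_{\langle N,N\rangle_\tau}-B_{\langle N,N\rangle_\tau+u}$ (run up to the hitting time corresponding to $N_\infty=0$) is a Bessel(3) process started at $0$, not a Brownian motion. The divergence you need does survive this correction --- for BES(3) from $0$ one still has $\int_{0+}R_u^{-2}\,du=\infty$ a.s. (e.g.\ by scaling plus the Blumenthal zero--one law, since $R_u$ is of order $\sqrt{u}$) --- but your proposal neither invokes Williams' decomposition nor any substitute for it, and the "most delicate point" you flag (extending $B$ past $\langle N,N\rangle_\infty$) is not where the real difficulty lies. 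An alternative repair, closer to the paper's toolkit, is to read off from Lemma \ref{dec} that under $\GG$ the process $N^*_\infty-N_{\tau+\cdot}$ solves the SDE of a time-changed Bessel(3) started at $0$, and conclude from there.

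For comparison, the paper's own proof sidesteps any law or time-change computation: writing $L=\frac{\hat{L}}{N^{\tau}}\,\mathcal{E}(R)$ with $\hat{L}=\mathcal{E}\bigl(\ind_{\lsir\tau,\infty\rsir}\frac{\varphi}{N^*_{\infty}-N}\cdot\tildeS\bigr)$, an It\^o computation identifies $\hat{L}_t-\hat{L}_s=\frac{1}{1-Z_t}-\frac{1}{1-Z_s}$ for $\tau<s\leq t$, and letting $s\searrow\tau$ contradicts the continuity of $Z$ and $Z_\tau=1$. This is a purely pathwise argument exploiting exactly the same explosion of $1/(N^*_\infty-N)$ at $\tau$ that you are after, but it never needs to know the law of the post-$\tau$ excursion; if you want to keep your route, you must supply the Williams/Bessel(3) input (or the enlargement SDE) rigorously.
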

\begin{proof}
Suppose that $L=\left(L_t\right)_{t\geq 0}$ is a local martingale deflator in $\GG$ on the global time horizon $\left[0,\infty\right]$. Similarly as in the proof of Lemma \ref{str-defl}, Theorem 1 of \citet{Sch2} together with equation \eqref{candec-2} implies that $L$ admits the following representation:
$$
L=\mathcal{E}\bigl(-\tilde{\alpha}\cdot\tildeS\,\bigr)\,\mathcal{E}\left(R\right)
$$
where $R=\left(R_t\right)_{t\geq 0}$ is a purely discontinuous $\GG$-local martingale satisfying $R_0=0$, $\left\{\Delta R\neq 0\right\}\subseteq\lsi\tau\rsi$ and $\Delta R_{\tau}>-1$ $P$-a.s. By using the definition of $\tilde{\alpha}$ together with Lemma \ref{str-defl}, we can write as follows:
$$
L=\mathcal{E}\!\left(-\ind_{\lsi0,\tau\rsi}\frac{\varphi}{N}\cdot\tildeS\,\right)
\mathcal{E}\!\left(\ind_{\lsir\tau,\infty\rsir}\frac{\varphi}{N^*_{\infty}-N}\cdot\tildeS\right)
\mathcal{E}\!\left(R\right)
= \frac{\hat{L}}{N^{\tau}}\,\mathcal{E}\!\left(R\right)
$$
with $\hat{L}:=\mathcal{E}\bigl(\ind_{\lsir\tau,\infty\rsir}\frac{\varphi}{N^*_{\infty}-N}\cdot\tildeS\,\bigr)$. 
Lemma \ref{NikYor} implies that $Z_t<1$ $P$-a.s. for all $t>\tau$ (see also \citet{Bar}, Lemma 2.4). Hence, using It\^o's formula together with Lemma \ref{NikYor} and Lemma \ref{dec}, we can write as follows, for all for $\tau<s\leq t$:
$$	\ba
\frac{1}{1-Z_t}-\frac{1}{1-Z_s}
&= \frac{N^*_{\tau}}{N^*_{\tau}-N_t}-\frac{N^*_{\tau}}{N^*_{\tau}-N_s}
= \int_s^t\!\!\!\frac{N^*_{\tau}}{\left(N^*_{\tau}-N_u\right)^2}\,dN_u
+\int_s^t\!\!\!\frac{N^*_{\tau}}{\left(N^*_{\tau}-N_u\right)^3}\,d\langle N\rangle_u	\\
&= \int_s^t\!\frac{N^*_{\tau}}{\left(N^*_{\tau}-N_u\right)^2}\,\varphi_u\,dS_u
+\int_s^t\!\frac{N^*_{\tau}}{\left(N^*_{\tau}-N_u\right)^3}\,\varphi_u'\,d\langle S,N\rangle_u	\\
&= \int_s^t\!\frac{N^*_{\tau}}{\left(N^*_{\tau}-N_u\right)^2}\,\varphi_u\,d\tildeS_u
= \int_s^t\!\frac{1}{1-Z_{u}}\frac{\varphi_u}{N^*_{\tau}-N_u}\,d\tildeS_u\,.
\ea	$$
Recalling that $N^*_{\infty}=N^*_{\tau}$, the uniqueness of the Doléans-Dade exponential implies that $\hat{L}_t-\hat{L}_{s}=\frac{1}{1-Z_t}-\frac{1}{1-Z_s}$ for all $\tau<s\leq t$. So, we can write:
$$
\underset{s\searrow\,\tau}{\lim}\frac{1}{1-Z_s}
= \frac{1}{1-Z_t}-\hat{L}_t+\hat{L}_{\tau}<\infty
\qquad \text{$P$-a.s.}
$$
Since $Z=\left(Z_t\right)_{t\geq 0}$ is continuous and $Z_{\tau}=1$ $P$-a.s., this yields a contradiction, thus showing that $L=\left(L_t\right)_{t\geq 0}$ cannot be a local martingale deflator in $\GG$ on the global time horizon $\left[0,\tau\right]$.
\end{proof}

Proposition \ref{defl-2} represents a negative result, since it shows that there exists no local martingale deflator in $\GG$ on $[0,\infty]$. We close this section by showing that, if we restrict our attention to the time horizon $[\rho,\infty]$, for any $\GG$-stopping time $\rho$ with $\rho>\tau$ $P$-a.s., then there still exists a local martingale deflator in $\GG$. For every such $\GG$-stopping time $\rho$, let us introduce the process $\,^{\rho}\!S:=S-S^{\rho}=S_{\rho\vee\cdot}-S_{\rho}$. In the spirit of Proposition \ref{exdefl}, we have the following result.

\begin{Prop}	\label{exdefl-tau+eps}
For every $\GG$-stopping time $\rho$ such that $\rho>\tau$ $P$-a.s., the process $^{\rho}\!L=(\,^{\rho}\!L_t)_{t\geq 0}$ defined by, for all $t\geq0$,
$$
^{\rho}\!L_t := \frac{N^*_{\infty}-N_{\rho}}{N^*_{\infty}-N_{\rho\vee\, t}},
$$
is a local martingale deflator for the process $\,^{\rho}\!S$ with respect to the filtration $\GG$.
Furthermore, the process $\,^{\rho}\!L$ is a uniformly integrable $\GG$-martingale if and only if $P(\rho<\nu)=0$, with $\nu=\inf\{t\geq0:N_t=0\}$.
\end{Prop}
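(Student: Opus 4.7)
The plan is to mimic the proof of Proposition \ref{exdefl}, but using the second (post-$\tau$) piece of the canonical decomposition \eqref{candec-2} of $S$ in $\GG$ in place of the first piece. Concretely, I introduce the auxiliary process $Y_t:=N^*_\infty-N_t$, so that $^{\rho}\!L_t=Y_{\rho}/Y_{\rho\vee t}$. Lemma \ref{NikYor} gives $\tau=\sup\{t\geq 0:N_t=N^*_\infty\}$, hence $N_t<N^*_\infty$ for all $t>\tau$, which together with $\rho>\tau$ ensures that $^{\rho}\!L$ is well-defined and strictly positive. I also note that $N^*_\infty=N^*_\tau$ is $\G_\tau$-measurable, so $Y_\rho$ is $\G_\rho$-measurable and the prefactor in front of $1/Y$ does not cause any measurability issue.

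The first main step is to check that $^{\rho}\!L$ is a $\GG$-local martingale and that $^{\rho}\!L\cdot{}^{\rho}\!S$ is a $\GG$-local martingale. The process $^{\rho}\!L$ is constantly equal to $1$ on $[0,\rho]$, so only the behaviour on $[\rho,\infty)$ is non-trivial, which takes place strictly after $\tau$. On this set, \eqref{candec-2} reduces to $dS_t=d\tildeS_t-d\langle\tildeS,\tildeS\rangle_t\,\varphi_t/Y_t$ with $N=1+\varphi\cdot S$. Applying It\^o's formula to $1/Y$ and using $dN=\varphi\cdot dS$ and $d\langle N,N\rangle=\varphi'\,d\langle\tildeS,\tildeS\rangle\,\varphi$, the drift produced by $-Y^{-2}dY$ exactly cancels the drift coming from $Y^{-3}d\langle N,N\rangle$, giving
\[
d\,{}^{\rho}\!L_t \;=\; \frac{Y_\rho\,\varphi_t}{Y_t^{2}}\cdot d\tildeS_t,\qquad t\geq\rho,
\]
so that $^{\rho}\!L=1-\ind_{\lsir\rho,\infty\rsir}\frac{Y_\rho\,\varphi}{Y^{2}}\cdot\tildeS$ is a $\GG$-local martingale. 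By the product rule applied componentwise to $^{\rho}\!L\cdot{}^{\rho}\!S^i$, the term $^{\rho}\!L_t\cdot\bigl(-d\langle\tildeS^i,\tildeS^j\rangle_t\varphi^j_t/Y_t\bigr)$ coming from $d\,{}^{\rho}\!S^i$ is cancelled exactly by the covariation $d\langle{}^{\rho}\!L,{}^{\rho}\!S^i\rangle_t=(Y_\rho/Y_t^{2})\varphi^j_t d\langle\tildeS^j,\tildeS^i\rangle_t$, leaving only the $\GG$-local martingale part $(Y_\rho/Y_t)d\tildeS^i_t+{}^{\rho}\!S^i_t\,d\,{}^{\rho}\!L_t$. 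Hence $^{\rho}\!L$ is indeed a local martingale deflator for $^{\rho}\!S$ in $\GG$.

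For the second assertion, $^{\rho}\!L$ is a strictly positive $\GG$-local martingale, hence a $\GG$-supermartingale by Fatou, so uniform integrability is equivalent to $E[{}^{\rho}\!L_\infty]=1$. Since $N_\infty=0$ $P$-a.s. by Lemma \ref{NikYor}, one has
\[
{}^{\rho}\!L_\infty \;=\; \frac{N^*_\infty-N_\rho}{N^*_\infty-N_\infty} \;=\; 1-\frac{N_\rho}{N^*_\infty},
\]
so $E[{}^{\rho}\!L_\infty]=1$ is equivalent to $N_\rho=0$ $P$-a.s. Because $N$ is a non-negative continuous local martingale starting at $1$, it is absorbed at $0$, so $\{N_\rho=0\}=\{\rho\geq\nu\}$ up to a $P$-null set. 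Therefore $^{\rho}\!L$ is a uniformly integrable $\GG$-martingale if and only if $P(\rho<\nu)=0$.

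I expect no genuine obstacle; the only delicate point is handling the decomposition \eqref{candec-2} on $[\rho,\infty)\subset(\tau,\infty)$ and verifying that the cancellations in $d\,{}^{\rho}\!L$ and in the product rule are exactly the same algebraic identities already used in the proof of Proposition \ref{exdefl}, with $N$ replaced by $Y=N^*_\infty-N$.
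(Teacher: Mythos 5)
Your proof is correct and follows essentially the same route as the paper's: It\^o's formula applied to $1/(N^*_{\infty}-N)$ on $\left(\rho,\infty\right)$ using the post-$\tau$ part of the decomposition \eqref{candec-2}, the product rule for the deflator property, and the supermartingale/constant-expectation argument giving uniform integrability if and only if $N_{\rho}=0$ $P$-a.s., i.e., $P(\rho<\nu)=0$. (Only a harmless slip: the integral form ``$^{\rho}\!L = 1-\ind_{\lsir\rho,\infty\rsir}\frac{Y_{\rho}\,\varphi}{Y^{2}}\cdot\tildeS$'' should carry a plus sign, consistent with your own displayed SDE and with the paper's computation.)
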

\begin{proof}
Observe first that the process $\,^{\rho}\!L$ is well-defined, since $N_t<N^*_{\infty}$ $P$-a.s. for every $t>\tau$. Moreover, $\,^{\rho}\!L$ is a $P$-a.s. strictly positive process with $\,^{\rho}\!L_{0}=1$ and, recalling that $\lim_{t\rightarrow\infty}N_t=0$ $P$-a.s., satisfies $\,^{\rho}\!L_{\infty}=(N^*_{\infty}-N_{\rho})/N^*_{\infty}>0$ $P$-a.s. To show that $\,^{\rho}\!L$ is a $\GG$-local martingale, we proceed by It\^o's formula as in the proof of Proposition \ref{exdefl}:
$$	\ba
d\,^{\rho}\!L_t &= \ind_{\{t>\rho\}}\frac{N^*_{\infty}-N_{\rho}}{\left(N^*_{\infty}-N_t\right)^2}\,dN_t
+\ind_{\{t>\rho\}}\frac{N^*_{\infty}-N_{\rho}}{\left(N^*_{\infty}-N_t\right)^3}\,
d\langle N\rangle_t	\\
&= \ind_{\{t>\rho\}}\frac{N^*_{\infty}-N_{\rho}}{\left(N^*_{\infty}-N_t\right)^2}\,\varphi_t
\left(dS_t+\frac{1}{N^*_{\infty}-N_t}\,d\langle S,N\rangle_t\right)
= \ind_{\{t>\rho\}}\frac{\,^{\rho}\!L_t}{N^*_{\infty}-N_t}\,\varphi_t\,
d\tildeS_t\,.
\ea $$
It remains to show that the product $\,^{\rho}\!S\,^{\rho}\!L$ is a $\GG$-local martingale:
$$	\ba
d\,^{\rho}\!S_t^{\rho}\!L_t &=
\,^{\rho}\!S_t\,d\,^{\rho}\!L_t + \,^{\rho}\!L_t\,d\,^{\rho}\!S_t + d\langle\,^{\rho}\!S,\,^{\rho}\!L\rangle_t	\\
&= \,^{\rho}\!S_t\,d\,^{\rho}\!L_t + \ind_{\{t>\rho\}}\,^{\rho}\!L_t\,dS_t + \ind_{\{t>\rho\}}\,^{\rho}\!L_t\frac{1}{N^*_{\infty}-N_t}\,d\langle S,N\rangle_t	\\
&= \,^{\rho}\!S_t\,d\,^{\rho}\!L_t + \ind_{\{t>\rho\}}\,^{\rho}\!L_t\,d\tildeS_t\,.
\ea	$$
The last claim of the proposition follows from the simple observation that $\,^{\rho}\!L$ is a uniformly integrable $\GG$-martingale if and only if $E[\,^{\rho}\!L_{\infty}]=1$. Since $\,^{\rho}\!L_{\infty}=1-N_{\rho}/N^*_{\infty}$, this holds if and only if $N_{\rho}=0$ $P$-a.s., i.e., if and only if $P(\rho<\nu)=0$.
\end{proof}

Note also that, since $N=0$ on $\lsi\nu,\infty\rsir$, the only case where the process $^{\rho}\!L$ is a uniformly integrable $\GG$-martingale is when it satisfies $^{\rho}\!L_t=1$ for all $t\geq0$ $P$-a.s.
In that case, the process $^{\rho}\!S$ is itself a $\GG$-local martingale.

\section{Arbitrages on the time horizon $[0,\tau]$}	\label{S4}

The goal of this section is to determine whether the information associated to an honest time $\tau$ does give rise to arbitrage profits in the enlarged financial market $\M^{\GG}$ on the time horizon $\left[0,\tau\right]$. Unless mentioned otherwise, we shall always suppose that Assumptions \ref{ass-NFLVR}, \ref{ass-avoid} and \ref{ass-PRP} are satisfied.

Recall first that Lemma \ref{NikYor} gives the existence of a non-negative $\FF$-local martingale $N$ with $N_0=1$ and $\lim_{t\rightarrow\infty}N_t=0$ $P$-a.s. such that $\tau=\sup\left\{t\geq 0:N_t=N^*_{\infty}\right\}$. It is clear that $N_{\tau}\geq 1$ $P$-a.s. as well as $P\left(N_{\tau}>1\right)>0$. Furthermore, due to Assumption \ref{ass-PRP}, there exists an $\R^d$-valued $\FF$-predictable process $\varphi=(\varphi_t)_{t\geq 0}\in L^{\FF}\left(S\right)$ such that $N=1+\varphi\cdot S$.
By relying on these arguments, we can easily construct an admissible $\GG$-strategy which yields an arbitrage opportunity (in the sense of part \emph{(ii)} of Definition \ref{def-arb}) in the enlarged financial market $\M^{\GG}$, as shown in part \emph{(ii)} of the following theorem. This gives a definite answer to the question of whether an agent in the enlarged financial market $\M^{\GG}$ can profit from the additional information and realise arbitrage profits on the time horizon $[0,\tau]$.

\begin{Thm}	\label{attau}
\mbox{}\\[-0.6cm]
\begin{itemize}
\item[(i)]
NA1 (or, equivalently, NUPBR) holds in $\M^{\GG}$ on the time horizon $\left[0,\tau\right]$;
\item[(ii)]
the strategy $\bar{\varphi}:=\ind_{\lsi0,\tau\rsi}\varphi$ yields an arbitrage opportunity in $\M^{\GG}$ and, hence, NA fails to hold in $\M^{\GG}$ on the time horizon $\left[0,\tau\right]$;
\item[(iii)]
NFLVR fails to hold in $\M^{\GG}$ on the time horizon $\left[0,\tau\right]$.
\end{itemize}
\end{Thm}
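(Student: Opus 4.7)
The proof would naturally split into the three claims, each leveraging machinery already developed in Section~\ref{S3}.

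For part~(i), I would simply invoke Proposition~\ref{exdefl}, which exhibits the explicit local martingale deflator $1/N^{\tau}$ in $\GG$ on $[0,\tau]$, together with the characterization of NA1 in terms of local martingale deflators given in part~(i) of Theorem~\ref{char-arb}. No further work is needed.

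For part~(ii), the plan is to check that $\bar{\varphi} := \ind_{\lsi 0,\tau\rsi}\varphi$ is a $1$-admissible $\GG$-strategy whose terminal wealth provides the arbitrage. First, $\bar\varphi$ is $\GG$-predictable since $\varphi$ is $\FF$-predictable and $\tau$ is a $\GG$-stopping time. Using Assumption~\ref{ass-PRP} to write $N = 1 + \varphi\cdot S$ in $\FF$ and invoking the invariance of stochastic integrals under filtration enlargement (Proposition~2.1 of \citet{Jeu}, recalled in Remark~\ref{rem-filtr}) to ensure the integral is preserved when passing to $\GG$, one identifies
\[
V(0,\bar\varphi)_t \;=\; (\bar\varphi\cdot S)_t \;=\; (\varphi\cdot S)_{t\wedge\tau} \;=\; N_{t\wedge\tau}-1 .
\]
Since $N \geq 0$, this already shows $V(0,\bar\varphi) \geq -1$, so $\bar\varphi \in \A^{\GG}_1$. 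By Lemma~\ref{NikYor}, $\tau<\infty$ $P$-a.s. and $N_\tau = N^{*}_{\infty} \geq N_0 = 1$ $P$-a.s., hence $V(0,\bar\varphi)_{\infty} = N_\tau - 1 \geq 0$ $P$-a.s. Finally, because $N_0=1$ while $\lim_{t\to\infty}N_t=0$ $P$-a.s., $N$ cannot be identically $1$, so $P(N^{*}_\infty > 1) > 0$ and hence $P(V(0,\bar\varphi)_\infty > 0) > 0$, delivering the arbitrage.

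Part~(iii) is then immediate: by part~(iii) of Theorem~\ref{char-arb}, NFLVR is equivalent to NA \emph{and} NA1 holding together, and NA has just been shown to fail. As a cross-check, one can also obtain part~(iii) directly from Lemma~\ref{str-defl}, which states that every local martingale deflator in $\GG$ on $[0,\tau]$ fails to be a uniformly integrable $\GG$-martingale, thereby precluding the existence of an ELMM$_{\GG}$ on $[0,\tau]$ (part~(ii) of Theorem~\ref{char-arb}). The only genuinely delicate point in the whole argument is the identification $\bar\varphi\cdot S = N^\tau - 1$ as a $\GG$-stochastic integral, which is precisely where the invariance of the integral under progressive enlargement, together with the predictable representation property of Assumption~\ref{ass-PRP}, is essential; everything else reduces to non-negativity and sample-path properties of $N$ already isolated in Lemma~\ref{NikYor}.
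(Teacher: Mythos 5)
Your proof follows essentially the same route as the paper's: part (i) via Proposition \ref{exdefl} and Theorem \ref{char-arb}(i); part (ii) by identifying $V(0,\bar{\varphi})=N^{\tau}-1$ through the representation $N=1+\varphi\cdot S$ and the invariance of the stochastic integral under the enlargement, then using $N\geq 0$ for admissibility and $N_{\tau}=N^*_{\infty}\geq 1$ for non-negativity of the terminal wealth; and part (iii) via Theorem \ref{char-arb}(iii), with your cross-check through Lemma \ref{str-defl} being exactly the paper's subsequent remark. The only imprecise step is your justification of $P\bigl(V(0,\bar{\varphi})_{\infty}>0\bigr)>0$: from ``$N$ is not identically $1$'' one cannot conclude $P(N^*_{\infty}>1)>0$, since a non-constant process starting at $1$ need not ever exceed $1$; the correct one-line argument is that if $N^*_{\infty}\leq 1$ $P$-a.s., then $N$ would be a bounded non-negative local martingale, hence a uniformly integrable martingale, contradicting $N_0=1$ and $\lim_{t\rightarrow\infty}N_t=0$ $P$-a.s. (the paper records $N_{\tau}\geq 1$ $P$-a.s. and $P(N_{\tau}>1)>0$ as clear at the beginning of Section \ref{S4}, and this is the argument behind it).
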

\begin{proof}
\emph{(i)}: this is a direct consequence of Proposition \ref{exdefl} together with part \emph{(i)} of Theorem \ref{char-arb}.\\
\emph{(ii)}: by Remark \ref{rem-filtr}, since $\tau$ is a $\GG$-stopping time and the process $\ind_{\lsi0,\tau\rsi}$ is left-continuous, it is clear that $\bar{\varphi}\in L^{\GG}\left(S\right)$. Since $V\left(0,\bar{\varphi}\right)_t=\left(\ind_{\lsi0,\tau\rsi}\varphi\cdot S\right)_t=N_{t\wedge\tau}-1\geq-1$ $P$-a.s. for all $t\geq 0$, we also have $\bar{\varphi}\in\A^{\GG}_1$. Note that $V\left(0,\bar{\varphi}\right)_{\infty}=V\left(0,\bar{\varphi}\right)_{\tau}=N_{\tau}-1$, thus implying $V\left(0,\bar{\varphi}\right)_{\tau}\geq 0$ $P$-a.s. and $P\bigl(V\left(0,\bar{\varphi}\right)_{\tau}>0\bigr)>0$. This shows that NA fails in the enlarged financial market $\M^{\GG}$ on the time horizon $\left[0,\tau\right]$.\\
\emph{(iii)}: this follows directly from part \emph{(iii)} of Theorem \ref{char-arb}.
\end{proof}

\begin{Rem}[An alternative proof of part (iii) of Theorem \ref{attau}]
It is worth noting that the failure of NFLVR in the enlarged financial market $\M^{\GG}$ on the time horizon $\left[0,\tau\right]$ can also be proved in a purely probabilistic way, by relying on the properties of local martingale deflators in $\GG$ on $[0,\tau]$ established in Section \ref{S3.1}. Indeed, suppose on the contrary that NFLVR holds on $\left[0,\tau\right]$. In view of part \emph{(ii)} of Theorem \ref{char-arb}, this implies the existence of a probability measure $Q\sim P$, with density process $L_t:=\frac{dQ|_{\G_t}}{dP|_{\G_t}}$, $t\geq 0$, such that $S^{\tau}$ is a $\left(Q,\GG\right)$-local martingale. Obviously, the process $L/L_0=\left(L_t/L_0\right)_{t\geq 0}$ is a local martingale deflator in $\GG$ on the time horizon $\left[0,\tau\right]$ and also a uniformly integrable $\GG$-martingale. However, this contradicts the last part of Lemma \ref{str-defl} and, hence, NFLVR cannot hold in the enlarged financial market $\M^{\GG}$ on $\left[0,\tau\right]$.
\end{Rem}

In particular, despite its simplicity, the result of part \emph{(ii)} of Theorem \ref{attau} is quite interesting. Indeed, it shows that, as long as Assumptions \ref{ass-NFLVR}, \ref{ass-avoid} and \ref{ass-PRP} hold, one can explicitly construct an admissible $\GG$-strategy which realises an arbitrage opportunity \emph{at} the honest time $\tau$. To the best of our knowledge, this result is new: as mentioned in the introduction, all previous works in the literature have only shown the existence of arbitrage opportunities \emph{immediately after} $\tau$ (see e.g. \citet{Imk} and \citet{Zw}).

\begin{Rem}
The arbitrage strategy $\bar{\varphi}$ constructed in part \emph{(ii)} of Theorem \ref{attau} admits a special interpretation. Indeed, the corresponding value process $V\left(1,\bar{\varphi}\right)=N^{\tau}$ is the reciprocal of the local martingale deflator $1/N^{\tau}$ (see Proposition \ref{exdefl}). According to Theorem 7 of \citet{HS} (see also \citet{KK}, Section 4.4), this means that $V\left(1,\bar{\varphi}\right)$ represents the value process of the \emph{growth-optimal portfolio}, which also coincides with the \emph{numéraire portfolio}, for the enlarged financial market $\M^{\GG}$ on the time horizon $\left[0,\tau\right]$.
\end{Rem}

\begin{Ex}[Discussion of Example \ref{example}]	\label{rem-BS}
Let $d=1$ and suppose that the real-valued process $S=\left(S_t\right)_{t\geq 0}$ is given as the solution of the SDE \eqref{BS} on the filtered probability space $\left(\Omega,\F,\FF^W\!,P\right)$, where $\FF^W$ is the ($P$-augmented) natural filtration of $W$. Since $S$ is a $\left(P,\FF\right)$-martingale, Assumption \ref{ass-NFLVR} is trivially satisfied and, clearly, Assumption \ref{ass-PRP} holds as well. Furthermore, since $\lim_{t\rightarrow\infty}S_t=0$ $P$-a.s. (due to the law of large numbers for Brownian motion), Corollary 2.4 of \citet{NY} implies that the random time $\tau=\sup\left\{t\geq 0:S_t=\sup_{u\geq 0}S_u\right\}$ is an honest time which avoids all $\FF$-stopping times, meaning that Assumption \ref{ass-avoid} is also satisfied. Note that, in the context of this example, we have $S/S_0=N$, as can be deduced from Proposition 2.2 of \citet{NY}, and $\nu=\infty$ $P$-a.s. Then, Theorem \ref{attau} directly imply claim (a) after Example \ref{example}. Observe, furthermore, that, in the context of this simple example, the arbitrage opportunity constructed in part \emph{(ii)} of Theorem \ref{attau} reduces to a simple buy-and-hold position on the risky asset $S$ until time $\tau$.
\end{Ex}

We close this section with two examples, which in particular show how the arbitrage strategy $\bar{\varphi}$ appearing in part \emph{(ii)} Theorem \ref{attau} can be explicitly calculated. The first example below is based on the \emph{last passage time} of a geometric Brownian motion, while the second example is linked to the \emph{drawdown} of a geometric Brownian motion.

\begin{Ex}[An arbitrage opportunity arising from a last passage time]	\label{last passage}
As in Example \ref{example}, let the discounted price process $S=(S_t)_{t\geq 0}$ of a risky asset be modeled as in \eqref{BS} on the filtered probability space $(\Omega,\F,\FF^{W}\!,P)$ and define the random time $\tau:=\sup\{t\geq 0:S_t=a\}$, for some $a\in(0,S_0)$. Since $\lim_{t\rightarrow\infty}S_t=0$ $P$-a.s., the random time $\tau$ is easily seen to be a $P$-a.s. finite honest time. Moreover, following Section 5.6 of \citet{JYC}, the associated Azéma supermartingale $Z=(Z_t)_{t\geq 0}$ is given by $Z_t=(S_t/a)\wedge 1$, for all $t\geq 0$. By Tanaka's formula (see e.g. \citet{JYC}, Section 4.1.8), the Doob-Meyer decomposition of $Z$ can be computed as:
$$
Z_t = 1+\frac{1}{a}\int_0^t\!\ind_{\{S_u<a\}}dS_u-\frac{1}{2a}L^a_t,
\qquad \text{for all }t\geq 0,
$$
where $L^a=(L^a_t)_{t\geq 0}$ denotes the local time of $S$ at the level $a$. Note that, in view of Remark 1.2 in \citet{MY} together with the continuity of $L^a$, the honest time $\tau$ satisfies Assumption \ref{ass-avoid}.
Since $Z_t>0$ $P$-a.s. for all $t\geq 0$, Theorem II.8.21 of \citet{JacShi} implies that the Azéma supermartingale $Z$ admits a multiplicative decomposition of the form $Z=N/D$, where:
$$
D = \exp\left(\frac{L^a}{2a}\right)
\qquad\text{and}\qquad
N = \mathcal{E}\!\left(\int\!\frac{1}{aZ}\ind_{\{S<a\}}dS\right)
= 1+\int\!\frac{D}{a}\ind_{\{S<a\}}dS
$$
where we have used the fact that, for almost all $\omega\in\Omega$, the measure $dL^a_{\cdot}(\omega)$ is supported by the set $\{t\geq 0:S_t(\omega)=a\}=\{t\geq 0:Z_t(\omega)=1\}$, due to Theorem IV.69 of \citet{Pr}.
Since $\lim_{t\rightarrow\infty}Z_t=0$ $P$-a.s., the continuous $\FF$-local martingale $N$ satisfies $\lim_{t\rightarrow\infty}N_t=0$ $P$-a.s. Furthermore, Skorohod's reflection lemma (see \citet{JYC}, Lemma 4.1.7.1) and the arguments used in the proof of Theorem 4.1 of \citet{NY} allow to check that $D=N^*$, with $N^*$ denoting the running supremum of $N$. This gives a complete and explicit description of the multiplicative decomposition $Z=N/N^*$ appearing in Lemma \ref{NikYor}. Since $\tau=\sup\{t\geq 0: Z_t=1\}=\sup\{t\geq 0: N_t=N^*_{\infty}\}$, the strategy $\bar{\varphi}:=\ind_{\lsi0,\tau\rsi}\frac{1}{a}\exp\bigl(\frac{L^a}{2a}\bigr)\ind_{\{S<a\}}\in\A_1^{\GG}$ yields an arbitrage opportunity in the enlarged financial market $\M^{\GG}$, as shown in part \emph{(ii)} of Theorem \ref{attau}.

In the context of the present example, it is interesting to remark that the natural candidate $\bar{\varphi}:=-\ind_{\lsi0,\tau\rsi}$ (i.e., a short position on the risky asset $S$ until time $\tau$) for an arbitrage strategy fails to be an admissible strategy, since the process $-S$ is unbounded from below.
However, an alternative buy-and-hold admissible arbitrage strategy can be constructed as follows\footnote{We are thankful to an associate editor for having pointed out to us this alternative arbitrage strategy.}: for $b\in(0,a)$, let us define the stopping time $\tau_b:=\inf\{t\geq0:S_t=b\}$ and the $\GG$-predictable process $\bar{\varphi}':=\ind_{\lsir\tau_b\wedge\tau,\tau\rsi}$. Since $V(0,\bar{\varphi}')_t=\ind_{\{\tau_b<\,t\wedge\tau\}}\left(S_{t\wedge\tau}-S_{\tau_b}\right)\geq -S_{\tau_b}=-b$ $P$-a.s., for all $t\geq 0$, we have $\bar{\varphi}'\in\A^{\GG}_b$. Moreover, $V(0,\bar{\varphi}')_{\infty}=\ind_{\{\tau_b<\tau\}}\left(S_{\tau}-b\right)=\ind_{\{\tau_b<\tau\}}\left(a-b\right)\geq0$ $P$-a.s. and
Doob's maximal identity (see e.g. \citet{MY}, Lemma 0.1) implies that
$$
P(\tau_b<\tau) = 1-P\biggl(\sup_{t\geq\tau_b}S_t<a\biggr)
= 1-P\biggl(\frac{S_{\tau_b}}{\sup_{t\geq\tau_b}S_t}>\frac{b}{a}\biggr) = \frac{b}{a}>0\,,
$$
thus proving that $\bar{\varphi}'$ realises an arbitrage opportunity in $\M^{\GG}$ on $[0,\tau]$.
\end{Ex}

\begin{Ex}[An arbitrage opportunity arising from a random time related to the drawdown]	\label{drawdown}
As in the preceding example, let the discounted price process $S=(S_t)_{t\geq 0}$ of a risky asset be modeled as in \eqref{BS} on the filtered probability space $(\Omega,\F,\FF^W\!,P)$. The \emph{drawdown} process $\bar{D}=(\bar{D}_t)_{t\geq 0}$ is defined as $\bar{D}_t:=S^*_t-S_t$, for all $t\geq 0$, with $S^*$ denoting the running supremum of $S$. For a fixed $K\in(0,S_0)$, define the $\FF$-stopping time $\tau_K:=\inf\{t\geq 0:\bar{D}_t=K\}$ and the random time $\tau:=\sup\{t\leq\tau_K:S_t=S^*_t\}$. Since $\lim_{t\rightarrow\infty}S_t=0$ $P$-a.s., it is clear that  $\tau$ is a $P$-a.s. finite honest time.
Recalling that the scale function of the diffusion \eqref{BS} can be chosen to be the identity function, Proposition 1 of \citet{ZH} implies that the Azéma supermartingale associated to $\tau$ is given by $Z=\ind_{\lsi0,\tau_K\rsir}(K-\bar{D})/K$ and admits the following Doob-Meyer decomposition, for all $t\geq 0$:
$$
Z_t = 1 + \frac{\bar{\sigma}}{K}\int_0^{t\wedge\tau_K}\!S_u\,dW_u - \frac{1}{K}\left(S^*_{t\wedge\tau_K}-S_0\right).
$$
Note that, in view of Remark 1.2 in \citet{MY}, the honest time $\tau$ satisfies Assumption \ref{ass-avoid}. Furthermore, Theorem II.8.21 of \citet{JacShi} implies that the supermartingale $Z$ admits a multiplicative decomposition of the form $Z=\ind_{\lsi0,\tau_K\rsir}\,N/D$, where:
$$
D = \exp\left(\frac{S^*_{\cdot\wedge\tau_K}-S_0}{K}\right)
\quad\text{and}\quad
N = \mathcal{E}\!\left(\frac{\bar{\sigma}}{K}\int\!\!\ind_{\lsi0,\tau_K\rsir}\frac{S}{Z}\,dW\right)
= 1+\int\!\!\ind_{\lsi0,\tau_K\rsir}\frac{D}{K}\,dS
$$
where we have used the fact that, for almost all $\omega\in\Omega$, the measure $dS_{\cdot}^*(\omega)$ is supported by the set $\{t\geq 0:S_t(\omega)=S^*_t(\omega)\}=\{t\geq 0:Z_t(\omega)=1\}$. 
Analogously to Example \ref{last passage}, it can be checked that $D=N^*$ on $\lsi0,\tau_K\rsir$, meaning that $Z=\ind_{\lsi0,\tau_K\rsir}\,N/N^*$. 
The fact that $\lim_{t\rightarrow\tau_K}Z_t=0$ $P$-a.s. implies that $\lim_{t\rightarrow\tau_K}N_t=0$ $P$-a.s. and, hence, due to the minimum principle for non-negative supermartingales (see \citet{JacShi}, Lemma III.3.6), we can write $Z=N/N^*$.
Since $\tau=\sup\{t\geq 0:Z_t=1\}=\sup\{t\geq 0:N_t=N^*_{\infty}\}$, the strategy $\bar{\varphi}:=\ind_{\lsi0,\tau\rsi}D/K\in\A^{\GG}_1$ yields an arbitrage opportunity in the enlarged financial market $\M^{\GG}$. By relying on \citet{ZH}, the present example can be easily generalised to the case where the constant parameter $\bar{\sigma}$ in \eqref{BS} is replaced by a strictly positive continuous function $\bar{\sigma}(\cdot)$ evaluated at $S_t$ (compare also with \citet{CNP}, Section 5).
\end{Ex}

\section{Arbitrages on the time horizon $[0,\sigma\wedge\tau]$}	\label{S5}

As shown in the preceding section, even though arbitrages of the first kind can never be realised in the enlarged financial market $\M^{\GG}$ on the time horizon $[0,\tau]$, one can profit from arbitrage opportunities at $\tau$. In the present section, we study whether it is possible to exploit the information of the progressively enlarged filtration $\GG$ in order to obtain arbitrage opportunities \emph{before} the honest time $\tau$. The answer to such a question is given by the following theorem, which relies on Lemma \ref{comp}.

\begin{Thm}	\label{beforetau}
Let $\sigma$ be an $\FF$-stopping time. Then NFLVR holds in the enlarged financial market $\M^{\GG}$ on the time horizon $\left[0,\sigma\wedge\tau\right]$ if and only if $P\left(\sigma\geq\nu\right)=0$.
\end{Thm}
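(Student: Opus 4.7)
The plan is to reduce the statement to a direct application of Lemma \ref{comp} together with part \emph{(ii)} of Theorem \ref{char-arb}. The characterization of NFLVR in terms of equivalent local martingale measures tells us that NFLVR on $[0,\sigma\wedge\tau]$ holds if and only if there exists a local martingale deflator $L$ in $\GG$ on $[0,\sigma\wedge\tau]$ whose stopped version $L^{\sigma\wedge\tau}$ is a uniformly integrable $\GG$-martingale. Indeed, if such an $L$ exists, the measure $Q$ defined by $dQ/dP := L_{\sigma\wedge\tau}$ is equivalent to $P$ and renders $S^{\sigma\wedge\tau}$ a $(Q,\GG)$-local martingale; conversely, from an ELMM$_{\GG}$ $Q$ on $[0,\sigma\wedge\tau]$ one obtains such a deflator as the density process of $Q$ with respect to $P$.

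For the sufficient direction, assume $P(\sigma\geq\nu)=0$. Take as candidate deflator $L=1/N^{\tau}$ from Proposition \ref{exdefl}; since $L$ is a local martingale deflator on the larger horizon $[0,\tau]$, it is in particular a deflator on $[0,\sigma\wedge\tau]$. Lemma \ref{comp} then yields
\[
E\bigl[L_{\sigma\wedge\tau}\bigr]
= 1 - E\!\left[\exp\!\left(-\int_0^{\tau}\!\frac{1+k_s}{N^*_s}\,dN^*_s\right)\ind_{\{\nu\leq\sigma\}}\right] = 1,
\]
because $\ind_{\{\nu\leq\sigma\}}=0$ $P$-a.s. by assumption (here the $\FF$-predictable process $k$ is the one associated to $L$ via the representation \eqref{str-defl-sigma}; for $L=1/N^{\tau}$ one actually has $k\equiv 0$ and $\eta=0$). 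Hence $L^{\sigma\wedge\tau}$ is a uniformly integrable $\GG$-martingale, which supplies the required ELMM$_{\GG}$.

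For the necessary direction, assume that NFLVR holds on $[0,\sigma\wedge\tau]$. By part \emph{(ii)} of Theorem \ref{char-arb} there exists an ELMM$_{\GG}$ $Q$ on $[0,\sigma\wedge\tau]$; its density process $L_t:=dQ|_{\G_t}/dP|_{\G_t}$ is a local martingale deflator in $\GG$ on $[0,\sigma\wedge\tau]$ and $L^{\sigma\wedge\tau}$ is a uniformly integrable $\GG$-martingale. Applying Lemma \ref{comp} to this $L$ gives $P(\nu\leq\sigma)=0$, i.e.\ $P(\sigma\geq\nu)=0$, completing the equivalence.

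The only subtle point is the passage between NFLVR and the existence of a deflator that is itself a true (UI) martingale when stopped, but this is just the statement of parts \emph{(ii)}--\emph{(iii)} of Theorem \ref{char-arb} combined with Definition \ref{defl}. All the heavy lifting is already in Lemma \ref{comp}, whose identity \eqref{comp-a} transparently detects whether the failure of uniform integrability of the deflator—and hence of NFLVR—is caused by the event $\{\nu\leq\sigma\}$ on which the Az\'ema supermartingale $Z$ reaches zero before (or at) the horizon $\sigma$.
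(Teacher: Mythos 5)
Your proposal is correct and follows essentially the same route as the paper: both directions reduce to Lemma \ref{comp} via the characterisation of NFLVR through a local martingale deflator whose stopped version is a uniformly integrable $\GG$-martingale, with $1/N^{\tau}$ (i.e.\ $k=\eta=0$ in \eqref{str-defl-sigma}) serving as the explicit deflator in the sufficiency direction and the density process of a putative ELMM$_{\GG}$ ruled out by \eqref{comp-a} in the necessity direction. This matches the paper's argument in all essentials.
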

\begin{proof}
If $P\left(\sigma\geq\nu\right)=0$, equation \eqref{str-defl-sigma} (with $k\!=\!\eta\!=\!0$) together with Lemma \ref{comp} implies that the process $1/N^{\sigma\wedge\tau}$ is a uniformly integrable $\GG$-martingale. Together with Proposition \ref{exdefl}, this shows that $1/N^{\sigma\wedge\tau}$ can be taken as the density process of an ELMM$_{\GG}$ for $S^{\sigma\wedge\tau}$. Due to part \emph{(ii)} of Theorem \ref{char-arb}, it follows that NFLVR holds in the enlarged financial market $\M^{\GG}$ on the time horizon $\left[0,\sigma\wedge\tau\right]$. Conversely, if $P\left(\sigma\geq\nu\right)>0$, Lemma \ref{comp} implies that $E\left[L_{\sigma\wedge\tau}\right]<1$ for any local martingale deflator $L=\left(L_t\right)_{t\geq 0}$ in $\GG$ on the time horizon $\left[0,\sigma\wedge\tau\right]$. This implies that $L^{\sigma\wedge\tau}$ cannot be a uniformly integrable martingale and, hence, no ELMM$_{\GG}$ can exist for the enlarged financial market $\M^{\GG}$ on the time horizon $[0,\sigma\wedge\tau]$.
\end{proof}

\begin{Rems}
1) Note that, for any $\FF$-stopping time $\sigma$ with $P\left(\sigma\geq\nu\right)=0$, we always have $P\left(\sigma<\tau\right)=E\left[Z_{\sigma}\right]=E\left[Z_{\sigma}\ind_{\left\{\sigma<\nu\right\}}\right]>0$. Hence, there is no contradiction between Theorem \ref{beforetau} and Theorem \ref{attau}.
Furthermore, Theorem XX.14 of \citet{DMM} shows that $\nu$ is the smallest $\FF$-stopping time which is $P$-a.s. greater than $\tau$.

2) Theorem \ref{beforetau} admits an immediate generalisation to $\GG$-stopping times. Indeed, due to the lemma on page 370 of Chapter VI of \citet{Pr}, for every $\GG$-stopping time $\sigma$ there exists an $\FF$-stopping time $\sigma'$ such that $\sigma\wedge\tau=\sigma'\wedge\tau$ $P$-a.s. Then, it can be readily checked that NFLVR holds in the enlarged financial market $\M^{\GG}$ on the time horizon $[0,\sigma\wedge\tau]$ if and only if $P(\sigma'\geq\nu)=0$.
\end{Rems}

In particular, due to Theorem \ref{beforetau}, the NFLVR condition holds in the enlarged financial market $\M^{\GG}$ on the time horizon $\left[0,\tau\wedge T\right]$ for any $T\in\left(0,\infty\right)$ satisfying $P\left(\nu\leq T\right)=0$. We also have the following corollary, which shows that one can never obtain arbitrage opportunities in the enlarged financial market $\M^{\GG}$ \emph{strictly before} the honest time $\tau$.

\begin{Cor}	\label{cor-beforetau}
Let $\varrho$ be a $\GG$-stopping time with $\varrho<\tau$ $P$-a.s. Then NFLVR holds in the enlarged financial market $\M^{\GG}$ on the time horizon $\left[0,\varrho\right]$.
\end{Cor}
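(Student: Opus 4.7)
The plan is to reduce to Theorem \ref{beforetau} by replacing the $\GG$-stopping time $\varrho$ with a suitably chosen $\FF$-stopping time. First, I invoke the classical lemma (recalled in the second Remark after Theorem \ref{beforetau}, originally Chapter VI, p.\ 370 of \citet{Pr}) which provides an $\FF$-stopping time $\varrho'$ such that $\varrho\wedge\tau=\varrho'\wedge\tau$ $P$-a.s. Since by hypothesis $\varrho<\tau$ $P$-a.s., this identity simplifies to $\varrho=\varrho'\wedge\tau$ $P$-a.s. In particular, on the complement of a $P$-null set we must have $\varrho'\wedge\tau<\tau$, which forces $\varrho'<\tau$ $P$-a.s.\ (otherwise $\varrho'\wedge\tau=\tau$, a contradiction).

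Second, I combine this with the key observation already established in the proof of Proposition \ref{exdefl}: under Assumption \ref{ass-avoid}, the random time $\tau$ satisfies $\tau<\nu$ $P$-a.s., where $\nu=\inf\{t\geq 0:Z_t=0\}=\inf\{t\geq 0:N_t=0\}$. Stringing the two strict inequalities together yields $\varrho'<\tau<\nu$ $P$-a.s., so in particular $P(\varrho'\geq\nu)=0$.

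Third, Theorem \ref{beforetau} applied to the $\FF$-stopping time $\varrho'$ immediately yields that NFLVR holds in $\M^{\GG}$ on the time horizon $[0,\varrho'\wedge\tau]$. Since $\varrho'\wedge\tau=\varrho$ $P$-a.s., the stopped price processes $S^{\varrho'\wedge\tau}$ and $S^{\varrho}$ are $P$-indistinguishable, hence the sets of admissible $\GG$-strategies and the associated families of terminal wealths coincide up to $P$-null sets. Consequently, NFLVR holds in $\M^{\GG}$ on $[0,\varrho]$ as well.

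The only mildly delicate point is the initial reduction from a $\GG$-stopping time to an $\FF$-stopping time via the lemma of \citet{Pr}; once that is in hand, the rest of the argument is essentially a bookkeeping exercise combining the strict inequality $\tau<\nu$ (which comes from avoidance of $\FF$-stopping times) with the criterion $P(\sigma\geq\nu)=0$ of Theorem \ref{beforetau}.
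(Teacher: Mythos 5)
Your proposal is correct and follows essentially the same route as the paper's own proof: reduce the $\GG$-stopping time $\varrho$ to an $\FF$-stopping time via the lemma on page 370 of \citet{Pr}, note that $\tau<\nu$ $P$-a.s.\ (from the proof of Proposition \ref{exdefl}), and conclude by Theorem \ref{beforetau}. You merely spell out more explicitly the deduction that the $\FF$-stopping time coincides with $\varrho$ $P$-a.s., which the paper states without detail.
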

\begin{proof}
If $\varrho$ is a $\GG$-stopping time with $\varrho<\tau$, the lemma on page 370 of \citet{Pr} implies that there exists an $\FF$-stopping time $\sigma$ with $\sigma=\varrho$ $P$-a.s. Noting that $\tau<\nu$ $P$-a.s. (see the beginning of the proof of Proposition \ref{exdefl}), the claim then follows from Theorem \ref{beforetau}.
\end{proof}

\begin{Rem}
Actually, the result of Corollary \ref{cor-beforetau} holds true in a general semimartingale setting, as long as the restricted financial market $\M^{\FF}$ satisfies NFLVR, regardless of the validity of Assumptions \ref{ass-avoid}-\ref{ass-PRP} and of the fact that $\tau$ is an honest time. Indeed, let $\tau$ be an arbitrary random time and let $\varrho$ be a $\GG$-stopping time with $\varrho<\tau$ $P$-a.s. As in the proof of Corollary \ref{cor-beforetau}, there exists an $\FF$-stopping time $\sigma$ with $\sigma=\rho$ $P$-a.s. On the time horizon $[0,\varrho]$, every admissible strategy $\theta\in\A^{\GG}$ can be assumed to satisfy $\theta=\ind_{\lsi0,\varrho\rsi}\theta$. Due to Lemma 4.4 of \citet{Jeu}, there always exists an $\FF$-predictable process $\psi$ such that $\theta=\ind_{\lsi0,\varrho\rsi}\psi=\ind_{\lsi0,\sigma\rsi}\psi$. Furthermore, the stochastic integrals $(\theta\cdot S)^{\varrho}=\left(\psi\cdot S\right)^{\sigma}$ are indistinguishable and the stochastic integral $\left(\psi\cdot S\right)^{\sigma}$ coincides in the two filtrations $\FF$ and $\GG$ (see \citet{JacShi}, Proposition III.6.25). This implies that, on the time horizon $[0,\varrho]$, every outcome of an admissible $\GG$-strategy can also be realised as the outcome of an admissible $\FF$-strategy. As a consequence, if $\M^{\FF}$ satisfies NFLVR, then NFLVR also holds for $\M^{\GG}$ on $[0,\varrho]$.
\end{Rem}

Note that, in the special context considered in Example \ref{example} (see also Example \ref{rem-BS}), Theorem \ref{beforetau} and Corollary \ref{cor-beforetau} together imply claim (b) after Example \ref{example}.

\begin{Ex} 
Due to Theorem \ref{beforetau}, if $\sigma$ is an $\FF$-stopping time such that $P\left(\sigma\geq\nu\right)>0$, then there exist arbitrage opportunities in the enlarged financial market $\M^{\GG}$ on the time horizon $\left[0,\sigma\wedge\tau\right]$. Let us illustrate this fact by means of a simple example, in the same setting of Example \ref{example}. Suppose that $S_0=1$ and let us define the $\FF$-stopping time $\tau^*:=\inf\bigl\{t\geq 0:S_t=1/2\bigr\}$, which is $P$-a.s. finite, and the honest time $\tau:=\sup\bigl\{t\in\left[0,\tau^*\right]:S_t=S^*_t\bigr\}$. Let us also introduce the $\FF$-stopping time $\sigma:=\inf\left\{t\geq 0:S_t=3/2\right\}$. It can be checked that $\tau$ avoids all $\FF$-stopping times and we have $\nu=\tau^*$ $P$-a.s. and $P\left(\sigma>\nu\right)=P\left(\sigma>\tau^*\right)>0$. Hence, due to Theorem \ref{beforetau}, NFLVR fails to hold in the enlarged financial market $\M^{\GG}$ on $\left[0,\sigma\wedge\tau\right]$. Indeed, the buy-and-hold strategy $\ind_{\lsi0,\sigma\wedge\tau\rsi}$ provides an arbitrage opportunity on the time horizon $\left[0,\sigma\wedge\tau\right]$, since $S_{\sigma\wedge\tau}-S_0\geq 0$ $P$-a.s. and $P\left(S_{\sigma\wedge\tau}>S_0\right)>0$.
In general, it remains an open problem how to explicitly construct arbitrage strategies based on the local martingale $N$, in the spirit of Theorem \ref{attau}, when $P(\sigma\geq\nu)>0$.
\end{Ex}

\section{Arbitrages on the time horizon $[0,\infty]$}	\label{S6}

In this section, we study the existence of arbitrage profits in the enlarged financial market $\M^{\GG}$ on the global time horizon $\left[0,\infty\right]$, taking into account especially what can happen \emph{after} the honest time $\tau$. 
Note that we already know from Theorem \ref{attau} that NA and NFLVR fail to hold in the enlarged financial market $\M^{\GG}$ on $\left[0,\tau\right]$ and, as a consequence, also on $\left[0,\infty\right]$. Hence, on the time horizon $[0,\infty]$, we shall restrict our attention to weaker no-arbitrage-type conditions, notably the NA1 condition.
We shall always suppose that Assumptions \ref{ass-NFLVR}, \ref{ass-avoid} and \ref{ass-PRP} are satisfied and that, without loss of generality, $P$ is already an ELMM$_{\FF}$ for $S$ (see the beginning of Section \ref{S3}).

Before analysing the validity of NA1 in the enlarged financial market $\M^{\GG}$ on the time horizon $[0,\infty]$, let us introduce a particularly strong notion of arbitrage profit.

\begin{Def}	\label{def-NUIP}
An element $\theta\in\A^{\GG}_0$ yields an \emph{Unbounded Increasing Profit} if
$$
P\bigl(V\left(0,\theta\right)_s\leq V\left(0,\theta\right)_t,\text{ for all }0\leq s \leq t\leq\infty\bigr)=1
\quad\text{and}\quad
P\bigl(V\left(0,\theta\right)_{\infty}>0\bigr)>0\,.
$$
If there exists no such $\theta\in\A^{\GG}_0$ we say that the financial market $\M^{\GG}$ satisfies the \emph{No Unbounded Increasing Profit (NUIP)} condition.
\end{Def}

The notion of \emph{Unbounded Increasing Profit} has been introduced under that name in \citet{KK} and is stronger than all the notions of arbitrage given in Definition \ref{def-arb}. Indeed, it can be directly checked that the existence of an unbounded increasing profit implies that none of the NA1, NA and NFLVR conditions can hold.
The next simple lemma shows that Unbounded Increasing Profits can never be realised in the enlarged financial market $\M^{\GG}$.

\begin{Lem}	\label{NUIP}
NUIP holds in the enlarged financial market $\M^{\GG}$ on the global time horizon $[0,\infty]$.
\end{Lem}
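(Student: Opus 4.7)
The plan is to proceed by contradiction, exploiting the canonical $\GG$-decomposition of $S$ established in equation \eqref{candec-2} together with the standard fact that a continuous local martingale of finite variation must be constant.

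Suppose $\theta\in\A^{\GG}_0$ yields an unbounded increasing profit. Then $V:=V(0,\theta)=\theta\cdot S$ is $P$-a.s. non-decreasing, continuous (as $S$ is continuous), and in particular a process of finite variation. Using the canonical decomposition \eqref{candec-2}, $S=\tildeS+\tildeA$, I can decompose
$$
V = \theta\cdot\tildeS + \theta\cdot\tildeA,
$$
where $\theta\cdot\tildeS$ is a continuous $\GG$-local martingale and $\theta\cdot\tildeA$ is continuous and of finite variation (being an integral with respect to the finite variation process $\tildeA$). By the uniqueness of the canonical decomposition of a continuous special semimartingale, the continuous local martingale part of $V$ is $\theta\cdot\tildeS$. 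Since $V$ is itself of finite variation, its continuous local martingale part is a continuous local martingale of finite variation starting at $0$, hence identically zero: $\theta\cdot\tildeS\equiv 0$.

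The key consequence is that the quadratic variation vanishes, i.e.\ $\int_0^t\theta_s'\,d\langle\tildeS,\tildeS\rangle_s\,\theta_s = 0$ $P$-a.s. for every $t\geq 0$. Applying the Kunita--Watanabe inequality with the matrix-valued measure $d\langle\tildeS,\tildeS\rangle$, one gets
$$
\Bigl|\!\int_0^t\!\theta_s'\,d\langle\tildeS,\tildeS\rangle_s\,\tilde{\alpha}_s\Bigr|
\leq \Bigl(\!\int_0^t\!\theta_s'\,d\langle\tildeS,\tildeS\rangle_s\,\theta_s\Bigr)^{\!1/2}\Bigl(\!\int_0^t\!\tilde{\alpha}_s'\,d\langle\tildeS,\tildeS\rangle_s\,\tilde{\alpha}_s\Bigr)^{\!1/2} = 0,
$$
where the last factor is finite (locally) since $\tilde{\alpha}$ is the drift coefficient appearing in \eqref{candec-2}. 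Therefore $\theta\cdot\tildeA\equiv 0$, and combining the two vanishing parts gives $V\equiv 0$, contradicting $P(V_\infty>0)>0$.

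Essentially no obstacle remains beyond checking that the Kunita--Watanabe step is applicable, i.e.\ that the scalar measure associated with $d\langle\tildeS,\tildeS\rangle$ on the diagonal controls the off-diagonal entries. This is standard once one writes $d\langle\tildeS,\tildeS\rangle_s = c_s\,dA_s$ with $c_s$ symmetric positive semi-definite and $A$ a scalar increasing process (e.g.\ the trace), since $\theta_s'c_s\theta_s=0$ $dA$-a.e.\ forces $c_s\theta_s = 0$ $dA$-a.e., which immediately gives $\theta_s'\,d\langle\tildeS,\tildeS\rangle_s\,\tilde{\alpha}_s=0$ without needing to invoke the inequality at all.
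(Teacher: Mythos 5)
Your proof is correct and follows essentially the same route as the paper: both deduce from the decomposition \eqref{candec-2} that $\theta\cdot\tildeS$ is a finite-variation $\GG$-local martingale, hence null, and then conclude that $\theta\cdot\tildeA=\int\theta_s'\,d\langle\tildeS,\tildeS\rangle_s\,\tilde{\alpha}_s$ vanishes as well --- the paper via $\langle\theta\cdot\tildeS,\tildeS^i\rangle=0$, you via $\langle\theta\cdot\tildeS\rangle=0$ together with the pointwise observation that $\theta_s'c_s\theta_s=0$ forces $c_s\theta_s=0$, which is the same computation. One small caveat: your intermediate Kunita--Watanabe justification claiming that $\int\tilde{\alpha}_s'\,d\langle\tildeS,\tildeS\rangle_s\,\tilde{\alpha}_s$ is locally finite is not warranted (it actually diverges immediately after $\tau$), but this is immaterial since your final pointwise argument dispenses with it.
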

\begin{proof}
Suppose that $\theta\in\A^{\GG}_0$ generates an unbounded increasing profit. Due to Definition \ref{def-NUIP}, the process $V\left(0,\theta\right)=\theta\cdot S$ is increasing, hence of finite variation. According to the notation introduced in \eqref{candec-2}, this implies that the $\GG$-local martingale $\theta\cdot\tildeS=\theta\cdot S-\theta\cdot\tildeA$ is null, being $\GG$-predictable and of finite variation, so that $\bigl|\bigl\langle\theta\cdot\tildeS,\tildeS^i\bigr\rangle\bigr|=0$ $P$-a.s. for all $i=1,\ldots,d$. It then follows, for all $t\geq 0$:
$$
V\left(0,\theta\right)_t = \left(\theta\cdot S\right)_t = (\theta\cdot\tildeA)_t
= \int_0^t\theta_s'\,d\bigl\langle\tildeS,\tildeS\,\bigr\rangle_{\!s}\,\tilde{\alpha}_{s}
= \int_0^td\bigl\langle\theta\cdot\tildeS,\tildeS\,\bigr\rangle_{\!s}\,\tilde{\alpha}_{s} = 0
\qquad \text{$P$-a.s.}
$$
thus contradicting the assumption that $P\bigl(V\left(0,\theta\right)_{\infty}>0\bigr)>0$.
\end{proof}

However, Lemma \ref{NUIP} only excludes the existence of an almost pathological notion of arbitrage.
The question of whether the more meaningful NA1 condition holds in the enlarged financial market $\M^{\GG}$ on the time horizon $[0,\infty]$ is negatively answered by the following proposition, which explicitly exhibits an arbitrage of the first kind, in the sense of part \emph{(i)} of Definition \ref{def-arb}.

\begin{Prop}	\label{arb2}
The random variable $\xi:=N_{\tau}-1$ yields an arbitrage of the first kind. As a consequence, NA1 (or, equivalently, NUPBR) fails to hold in the enlarged financial market $\M^{\GG}$ on the global time horizon $\left[0,\infty\right]$.
\end{Prop}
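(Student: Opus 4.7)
The plan is to construct, for every $x > 0$, an explicit $\GG$-strategy $\theta^x \in \A^{\GG}_x$ whose terminal wealth $P$-a.s. dominates $\xi = N_\tau - 1$. The natural candidate $\bar\varphi$ from Theorem \ref{attau}(ii) is only $1$-admissible and so cannot be used directly when $x < 1$; the key observation is that the same terminal gain $N_\tau$ (which already dominates $\xi$) can be generated from \emph{zero} initial capital by taking a short position in $\varphi$ \emph{after} time $\tau$, and this one strategy then works uniformly in $x > 0$.

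The preliminaries on $\xi$ are routine: $\xi \geq 0$ because $N_\tau = N^*_\infty \geq N_0 = 1$ by Lemma \ref{NikYor}, and $P(\xi > 0) = P(N_\tau > 1) > 0$, a fact already used in the proofs of Proposition \ref{exdefl} and Lemma \ref{str-defl}. Letting $\varphi$ be the $\FF$-predictable integrand with $N = 1 + \varphi \cdot S$ (granted by Assumption \ref{ass-PRP}), I would introduce
\[
\psi := -\varphi\,\ind_{\lsir \tau,\infty\rsir}.
\]
Since $\ind_{\lsir\tau,\infty\rsir}$ is left-continuous and $\GG$-adapted, hence $\GG$-predictable, and $\varphi \in L^{\FF}(S) \subseteq L^{\GG}(S)$ by Remark \ref{rem-filtr}, we have $\psi \in L^{\GG}(S)$. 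A direct computation then gives $(\psi\cdot S)_t = 0$ on $\lsi 0,\tau\rsi$ and
\[
(\psi\cdot S)_t \;=\; -\!\int_\tau^t\!\varphi_s\,dS_s \;=\; -(N_t - N_\tau) \;=\; N_\tau - N_t \qquad\text{for all } t > \tau.
\]

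The admissibility and the domination of $\xi$ both follow from Lemma \ref{NikYor}. The characterisation $\tau = \sup\{s \geq 0 : N_s = N^*_\infty\}$ gives $N_t \leq N^*_\infty = N_\tau$ for every $t > \tau$, so $(\psi\cdot S)_t \geq 0$ for every $t \geq 0$; and since $\lim_{t\to\infty}N_t = 0$ $P$-a.s., the terminal value $(\psi\cdot S)_\infty = N_\tau$ is well defined $P$-a.s. Hence $\psi \in \A^{\GG}_0 \subseteq \A^{\GG}_x$ for every $x \geq 0$, and choosing $\theta^x := \psi$ yields $V(x,\theta^x)_\infty = x + N_\tau \geq N_\tau - 1 = \xi$ $P$-a.s., which is exactly Definition \ref{def-arb}(i). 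The failure of NA1 on $[0,\infty]$ is then immediate (and, alternatively, already a consequence of Theorem \ref{char-arb}(i) combined with Proposition \ref{defl-2}). The only slightly subtle point is the sign of $(\psi\cdot S)$ after $\tau$, which however reduces to the defining property of $\tau$ in Lemma \ref{NikYor}, so no serious obstacle arises.
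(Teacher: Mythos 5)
Your proposal is correct and follows essentially the same route as the paper: the strategy $\psi=-\varphi\,\ind_{\lsir\tau,\infty\rsir}$ is exactly the paper's $\hat{\varphi}$, with the same wealth computation $V(0,\psi)=N^{\tau}-N$, the same use of Lemma \ref{NikYor} for admissibility, and the same conclusion $V(x,\psi)_{\infty}=x+N_{\tau}\geq\xi$ for every $x>0$. The only cosmetic difference is that you invoke nonnegativity of the wealth after $\tau$ where the paper notes strict positivity; both suffice for $0$-admissibility.
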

\begin{proof}
Clearly, $\xi:=N_{\tau}-1$ is an $\F$-measurable non-negative random variable with $P\left(\xi>0\right)>0$.
Let $\hat{\varphi}:=-\ind_{\lsir\tau,\infty\rsir}\varphi$. Due to Remark \ref{rem-filtr} and since $\tau$ is a $\GG$-stopping time, it is clear that $\hat{\varphi}\in L^{\GG}\left(S\right)$. Since $V\left(0,\hat{\varphi}\right)=-\ind_{\lsir\tau,\infty\rsir}\varphi\cdot S=N^{\tau}-N$, we have $V\left(0,\hat{\varphi}\right)_t=0$ on $\left\{t\leq\tau\right\}$ and also $V\left(0,\hat{\varphi}\right)_t>0$ on $\left\{t>\tau\right\}$, because of the fact that $\tau=\sup\left\{t\geq 0:N_t=N^*_{\infty}\right\}$ (see Lemma \ref{NikYor}), thus implying that $\hat{\varphi}\in\A_0^{\GG}$. For all $x>0$, we then have $V\left(x,\hat{\varphi}\right)_{\infty}=x+N_{\tau}-N_{\infty}=x+1+\xi>\xi$, thus showing that $\xi$ yields an arbitrage of the first kind.
\end{proof}

\begin{Rem}[An alternative proof of the failure of NA1]	\label{proof-arb2}
We have chosen to present a constructive and simple proof of Proposition \ref{arb2}, by explicitly exhibiting an arbitrage of the first kind. However, the failure of NA1 in the enlarged financial market $\M^{\GG}$ on the time horizon $[0,\infty]$ can also be proved by relying on purely probabilistic arguments. Indeed, Proposition \ref{defl-2} shows that there exists no local martingale deflator in $\GG$ on the time horizon $[0,\infty]$. Together with part \emph{(i)} of Theorem \ref{char-arb}, this implies that NA1 fails to hold in $\M^{\GG}$ on $[0,\infty]$.
\end{Rem}

It has already been shown in \citet{Imk} and \citet{Zw} that NFLVR fails to hold \emph{after} $\tau$ in the enlarged financial market $\M^{\GG}$. However, the proofs given in those papers are somehow abstract and technical. In contrast, the proof of Proposition \ref{arb2} is extremely simple and explicitly shows the trading strategy which realises the arbitrage profit. Furthermore, we have shown that not only NA and NFLVR but also the weaker NA1 and NUPBR no-arbitrage-type conditions fail to hold in the enlarged financial market $\M^{\GG}$ on the global time horizon $\left[0,\infty\right]$.

\begin{Rems}	\label{rem-BS-3}
1) As shown in the proof of Proposition \ref{arb2}, the strategy $\hat{\varphi}\in\A^{\GG}_0$ satisfies $\hat{\varphi}=\hat{\varphi}\ind_{\lsir\tau,\infty\rsir}$ and $\left(\hat{\varphi}\cdot S\right)_t>0$ for all $t>\tau$. According to Definition 3.2 of \citet{DS2}, the strategy $\hat{\varphi}$ generates an \emph{immediate arbitrage opportunity} at the $\GG$-stopping time $\tau$. This intuitively means that one can realise an arbitrage profit \emph{immediately after} the $\GG$-stopping time $\tau$ has occurred, i.e., on the time interval $\left[\tau,\tau+\varepsilon\right]$, for every $\varepsilon>0$. This possibility has been also pointed out in \citet{Zw}.

2) Proposition \ref{arb2} can be seen as a counterpart to part \emph{(ii)} of Theorem \ref{attau}. Indeed, part \emph{(ii)} of Theorem \ref{attau} shows that one can realise an arbitrage opportunity \emph{at} time $\tau$ by taking a position (up to $\tau$) in the strategy $\varphi$ which replicates $N$, while Proposition \ref{arb2} shows that one can realise an arbitrage of the first kind (as well as an immediate arbitrage opportunity) \emph{after} time $\tau$ by taking a position in the strategy $-\varphi$. It is interesting to observe that in both cases the arbitrage strategy is directly related to the $\FF$-local martingale $N$ appearing in the multiplicative decomposition of the Azéma supermartingale $Z$ of $\tau$. Note also that the admissibility constraint prevents the arbitrage of the first kind $\xi=N_{\tau}-1$ to be realised on the time horizon $[0,\tau]$.

3) As considered in Example \ref{example}, let $d=1$ and suppose that the real-valued process $S=\left(S_t\right)_{t\geq 0}$ is given as the solution of the SDE \eqref{BS} on the filtered probability space $(\Omega,\F,\FF^W\!,P)$, where $\FF^W$ is the ($P$-augmented) natural filtration of $W$. In view of Example \ref{rem-BS}, the random time $\tau$ is an honest time which avoids every $\FF$-stopping time. Hence, Proposition \ref{arb2} directly implies claim (c) after Example \ref{example}. Observe that, in this simple example, the arbitrage strategy $\hat{\varphi}$ constructed in the proof of Proposition \ref{arb2} reduces simply to a short position on $S$ from time $\tau$ onwards.
\end{Rems}

As shown in Proposition \ref{arb2}, an agent can realise an arbitrage of the first kind in the enlarged financial market $\M^{\GG}$ by adopting a suitable trading strategy \emph{as soon as} the honest time $\tau$ occurs.
Motivated by this result, let us now consider what happens in the enlarged financial market $\M^{\GG}$ if all market participants are only allowed to trade on the time horizon $[\rho,\infty]$, where $\rho$ is a $\GG$-stopping time with $\rho>\tau$ $P$-a.s.  
To this effect, let us recall the process $^{\rho}\!S=S-S^{\rho}$ introduced at the end of Section \ref{S3.3} and define the \emph{$\rho$-shifted enlarged financial market} $\,^{\rho}\!\M^{\GG}:=\left\{\Omega,\F,\GG,P;\,^{\rho}\!S,\,^{\rho}\!\A^{\GG}\right\}$, where $^{\rho}\!\A^{\GG}$ denotes the set of all elements in $L^{\GG}(^{\rho}\!S)$ which are admissible, in the sense of Definition \ref{def-adm}.

\begin{Prop}	\label{arb-tau+eps}	\mbox{}
\begin{itemize}
\item[(i)]
For every $\GG$-stopping time $\rho$ such that $\rho>\tau$ $P$-a.s., NA1 holds in the $\rho$-shifted enlarged financial market $\,^{\rho}\!\M^{\GG}$;
\item[(ii)]
there exists a constant $\delta>0$ such that the strategy $\check{\varphi}:=-\ind_{\lsir\tau,\infty\rsir}\varphi/N^*_{\tau}$ yields an arbitrage opportunity in the $(\tau+\eps)$-shifted enlarged financial market $\,^{\tau+\eps}\!\M^{\GG}$, for every $\eps\in(0,\delta)$, and, hence, NA and NFLVR fail to hold in the $(\tau+\eps)$-shifted enlarged financial market $\,^{\tau+\eps}\!\M^{\GG}$, for every $\eps\in(0,\delta)$.
\end{itemize}
\end{Prop}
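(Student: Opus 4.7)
Part~(i) is a direct consequence of results already established: Proposition~\ref{exdefl-tau+eps} exhibits a local martingale deflator $\,^{\rho}\!L$ in $\GG$ for the shifted process $\,^{\rho}\!S$, so part~(i) of Theorem~\ref{char-arb} immediately yields NA1 in the shifted financial market $\,^{\rho}\!\M^{\GG}$.

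For part~(ii), the plan is to compute in closed form the wealth process generated by the candidate strategy $\check{\varphi}$ and then verify admissibility together with the strict-positivity condition. Because $\check{\varphi}$ is supported on $\lsir\tau,\infty\rsir$ while the shifted price $\,^{\tau+\eps}\!S$ only has increments on $\lsir\tau+\eps,\infty\rsir$, and since $N=1+\varphi\cdot S$, the stochastic integral collapses to
$$
V(0,\check{\varphi})_t \;=\; -\frac{1}{N^*_{\tau}}\int_{\tau+\eps}^{t\vee(\tau+\eps)}\!\!\varphi_s\,dS_s \;=\; -\frac{N_{t\vee(\tau+\eps)}-N_{\tau+\eps}}{N^*_{\tau}},
$$
which vanishes on $\{t\leq\tau+\eps\}$ and equals $(N_{\tau+\eps}-N_t)/N^*_{\tau}$ afterwards. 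Lemma~\ref{NikYor} guarantees $N_t\leq N^*_{\infty}=N^*_{\tau}$ for every $t\geq\tau$, so the above expression stays above $-1$ $P$-a.s., proving $\check{\varphi}\in\A^{\GG}_1$. Since $\lim_{t\to\infty}N_t=0$ $P$-a.s., the terminal wealth is $V(0,\check{\varphi})_{\infty}=N_{\tau+\eps}/N^*_{\tau}\geq 0$ $P$-a.s.

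It remains to select $\delta>0$ so that $P\bigl(V(0,\check{\varphi})_{\infty}>0\bigr)>0$ for every $\eps\in(0,\delta)$. Since $N$ is continuous and strictly positive on $\lsi0,\nu\rsir$, with $\nu=\inf\{t\geq 0:N_t=0\}$, one has $N_{\tau+\eps}>0$ if and only if $\tau+\eps<\nu$. Recalling that $\tau<\nu$ $P$-a.s.\ (see the proof of Proposition~\ref{exdefl}), the events $\{\nu-\tau>\eps\}$ increase to $\{\nu-\tau>0\}$, a set of full measure, as $\eps\searrow 0$; any $\delta>0$ with $P(\nu-\tau>\delta)>0$ then does the job by monotonicity. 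Consequently $\check{\varphi}$ realises an arbitrage opportunity in $\,^{\tau+\eps}\!\M^{\GG}$, ruling out NA and, by part~(iii) of Theorem~\ref{char-arb}, also NFLVR. The only non-routine step is the admissibility check, which hinges crucially on the identity $\tau=\sup\{t\geq 0:N_t=N^*_{\infty}\}$ from Lemma~\ref{NikYor}: this is precisely what caps the loss incurred by the short position in $\varphi$ after $\tau$ at the initial capital $1$.
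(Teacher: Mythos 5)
Your proposal is correct and follows essentially the same route as the paper: part (i) via Proposition \ref{exdefl-tau+eps} and Theorem \ref{char-arb}(i), and part (ii) by computing $V(0,\check{\varphi})=\bigl(N_{\tau+\eps}-N_{\cdot\vee(\tau+\eps)}\bigr)/N^*_{\tau}$, bounding it below by $-1$ using $N\leq N^*_{\infty}=N^*_{\tau}$ from Lemma \ref{NikYor}, and choosing $\delta$ from $P(\tau<\nu)=1$ so that $P\bigl(V(0,\check{\varphi})_{\infty}>0\bigr)=P(\nu>\tau+\eps)>0$. Your explicit monotonicity argument for the existence of $\delta$ is just a slightly more detailed rendering of the paper's one-line observation.
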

\begin{proof}
\emph{(i)}: this is a direct consequence of Proposition \ref{exdefl-tau+eps} together with part \emph{(i)} of Theorem \ref{char-arb}.\\
\emph{(ii)}: since $P(\tau<\nu)=1$, where $\nu=\inf\{t\geq0:N_t=0\}$ (compare the proof of Proposition \ref{exdefl}), there exists a constant $\delta>0$ such that $P(\nu>\tau+\eps)>0$ for every $\eps\in(0,\delta)$. Since $\varphi\in L^{\FF}(S)\subset L^{\GG}(S)$ and $\tau$ is a $\GG$-stopping time, it is clear that $\check{\varphi}\in L^{\GG}(^{\tau+\eps}\!S)$, for every $\eps\in(0,\delta)$. Furthermore, for all $t\geq0$:
$$
V(0,\check{\varphi})_t = \left(\check{\varphi}\cdot\,^{\tau+\eps}\!S\right)_t
= -\frac{N_{(\tau+\eps)\vee\, t}-N_{\tau+\eps}}{N^*_{\tau}}
\geq -\frac{N_{(\tau+\eps)\vee\, t}}{N^*_{\infty}} >-1
\qquad \text{$P$-a.s.}
$$
thus showing that $\check{\varphi}\in\,^{\tau+\eps}\!\A_1^{\GG}$.
Moreover, we have that $V(0,\check{\varphi})_{\infty}=N_{\tau+\eps}/N^*_{\infty}\geq0$ $P$-a.s. and $P\bigl(V(0,\check{\varphi})_{\infty}>0\bigr)=P(\nu>\tau+\eps)>0$, meaning that the strategy $\check{\varphi}$ yields an arbitrage opportunity in the $(\tau+\eps)$-shifted enlarged financial market $\,^{\tau+\eps}\!\M^{\GG}$.
\end{proof}

In particular, Proposition \ref{arb2} together with part \emph{(i)} of the above proposition shows that the possibility of realising arbitrages of the first kind in the enlarged financial market $\M^{\GG}$ crucially depends on the possibility of trading as soon as $\tau$ has occurred. Indeed, from time $\tau+\eps$ onwards, the additional knowledge of the information of the filtration $\GG$ can only allow for arbitrage opportunities, since NA1 holds in the $(\tau+\eps)$-shifted enlarged financial market $\,^{\tau+\eps}\!\M^{\GG}$. In other words, the potential loss of the viability (in the sense of \citet{Kar}) of the enlarged financial market $\M^{\GG}$ on the time horizon $[0,\infty]$ is only due to the arbitrage profits that can be realised by trading as soon as the honest time $\tau$ occurs. Hence, preventing agents from trading on the time horizon $(\tau,\tau+\eps)$, for every $\eps>0$, will preserve the viability of the enlarged financial market $\M^{\GG}$, even though arbitrage opportunities may still exist.
Note also that, in the special context considered in Example \ref{example} (see also Example \ref{rem-BS}), Proposition \ref{arb2} and Proposition \ref{arb-tau+eps} together imply claim (c) after Example \ref{example}.

We close this section by showing that the completeness of the restricted financial market $\M^{\FF}$ (see Assumption \ref{ass-PRP} and Remark \ref{complete}) or, more specifically, the existence of a stochastic integral representation of the form $N=1+\varphi\cdot S$, for some $\varphi\in L^{\FF}(S)$, is crucial for our results to hold (in particular, see the proofs of part \emph{(ii)} of Theorem \ref{attau}, Proposition \ref{arb2} and part \emph{(ii)} of Proposition \ref{arb-tau+eps}), as we are now going to illustrate by means of an explicit counterexample.

\begin{Ex}	\label{rem-counterexample}
Let $W^1=(W^1_t)_{t\geq 0}$ and $W^2=(W^2_t)_{t\geq 0}$ be two independent Brownian motions and denote by $\FF^i=(\F^i_t)_{t\geq 0}$ the $P$-augmented natural filtration of $W^i$, for $i=1,2$. Define $\FF:=\FF^1\vee\FF^2$ and let the discounted price process $S=(S_t)_{t\geq0}$ of a risky asset be given as the solution to the following SDE on the filtered probability space $(\Omega,\F,\FF,P)$:
\be	\label{SDE-counter}	\left\{	\ba
dS_t &= S_t\,f(W^1_t)\,dW^2_t,	\\
S_0 &= s\in\left(0,\infty\right),
\ea	\right. \ee
where $f:\R\rightarrow(0,\infty)$ is such that the above SDE admits a unique strong solution. Clearly, since $S$ is an $\FF$-local martingale, NFLVR holds in the financial market $\M^{\FF}$ (Assumption \ref{ass-NFLVR}). 
Let $\tau$ be any $P$-a.s. finite honest time with respect to the filtration $\FF^1$ and denote by $\GG^1=(\G^1_t)_{t\geq0}$ and $\GG=(\G_t)_{t\geq0}$ the progressive enlargements of $\FF^1$ and $\FF$, respectively, with respect to $\tau$. Since $W^1$ and $W^2$ are independent and $\tau$ is $\F^1_{\infty}$-measurable, the Brownian motion $W^2$ is independent of $\G^1_{\infty}=\F^1_{\infty}$. Noting that $\GG=\GG^1\vee\FF^2$, this implies that $W^2$ remains a Brownian motion in the filtration $\GG$ and, hence, the process $S$ given by \eqref{SDE-counter} is also a $\GG$-local martingale. Due to Theorem \ref{char-arb}, this implies that NA1, NA and NFLVR all hold in the enlarged financial market $\M^{\GG}$ on the global time horizon $[0,\infty]$. 

In the context of the present example, it is easy to show that the replication arguments used in the proofs of part \emph{(ii)} of Theorem \ref{attau}, Proposition \ref{arb2} and part \emph{(ii)} of Proposition \ref{arb-tau+eps} break down. Indeed, if $\tau$ avoids all $\FF^1$-stopping times, Lemma \ref{NikYor} applied to the filtration $\FF^1$ gives the existence of an $\FF^1$-local martingale $N=(N_t)_{t\geq0}$ with $N_0=1$ and $\lim_{t\rightarrow\infty}N_t=0$ $P$-a.s. such that $P(\tau>t|\F^1_t)=N_t/N^*_t$ for all $t\geq 0$. By the predictable representation property in the Brownian filtration $\FF^1$, there exists an $\FF^1$-predictable process $\varphi=(\varphi_t)_{t\geq0}\in L^2_{\text{loc}}(W^1)$ such that $N=1+\varphi\cdot W^1$. 
Moreover, due to the independence of $W^1$ and $W^2$, it is also easy to check that $P(\tau=\sigma)=0$ for every $\FF$-stopping time $\sigma$ (Assumption \ref{ass-avoid}) and $Z_t=N_t/N^*_t$, for all $t\geq 0$, where $N$ is also an $\FF$-local martingale. However, we cannot replicate the $\FF$-local martingale $N$ by trading in the risky asset $S$, since the two processes $N$ and $S$ are driven by the independent Brownian motions $W^1$ and $W^2$, respectively.
\end{Ex}

\section{Conclusions and extensions}	\label{S7}

In the present paper, we have dealt with the question of whether the additional information associated to an honest time does give rise to arbitrage. Under Assumptions \ref{ass-NFLVR}-\ref{ass-PRP}, we have given a complete and precise answer in the context of a general continuous financial market model. In particular, we have studied the validity of no-arbitrage-type conditions which go beyond the classical NFLVR criterion, such as the NUIP and NA1/NUPBR conditions. We have shown in a simple and direct way that an informed agent can realise arbitrage opportunities \emph{at} an honest time as well as \emph{after} an honest time, while arbitrages of the first kind can only be obtained by trading \emph{as soon as} an honest time occurs. On the other hand, it is impossible to make arbitrage profits strictly \emph{before} an honest time. The present paper significantly extends previous results in the literature, providing at the same time simpler and more transparent proofs.

We want to conclude by commenting on the role of Assumptions \ref{ass-NFLVR}-\ref{ass-PRP} and discussing some possible extensions and generalisations. 
The present paper aims at understanding the impact of an honest time on the validity of suitable no-arbitrage-type conditions in the \emph{enlarged} financial market $\M^{\GG}$ and, hence, we assumed from the beginning that the \emph{restricted} financial market $\M^{\FF}$ is free from any kind of arbitrage, in the classical sense of NFLVR (Assumption \ref{ass-NFLVR}). However, we want to point out that analogous results can be obtained if the restricted financial market $\M^{\FF}$ satisfies NA1 (or, equivalently, NUPBR) but the stronger NFLVR condition does not necessarily hold. In that case, Theorem \ref{attau} and Proposition \ref{arb2} continue to hold, provided that Assumptions \ref{ass-avoid}-\ref{ass-PRP} are still satisfied. Indeed, due to Remark \ref{rem-filtr}, if NFLVR fails to hold in the restricted financial market $\M^{\FF}$, then it fails in the enlarged financial market $\M^{\GG}$ as well (and can be also shown to fail on the time horizon $\left[0,\tau\right]$). Moreover, by relying on part \emph{(i)} of Theorem \ref{char-arb} together with Lemma \ref{dec} and Assumptions \ref{ass-avoid}-\ref{ass-PRP}, one can show that NA1 (or, equivalently, NUPBR) holds in the enlarged financial market $\M^{\GG}$ on the time horizon $\left[0,\tau\right]$ but fails on the global time horizon $\left[0,\infty\right]$. For the sake of brevity, we omit the details and refer the interested reader to Section 4.4.3 of \citet{Fon}.

The assumption that the honest time $\tau$ avoids all $\FF$-stopping times (Assumption \ref{ass-avoid}) seems to be crucial. Indeed, if NFLVR holds in the restricted financial market $\M^{\FF}$ (Assumption \ref{ass-NFLVR}) but Assumption \ref{ass-avoid} does not hold, then an honest time $\tau$ does not necessarily give rise to arbitrage opportunities in the enlarged financial market $\M^{\GG}$ on the time horizon $\left[0,\tau\right]$. As an example, let $\tilde{\tau}$ be an honest time which avoids all $\FF$-stopping times and let $\sigma$ be any $\FF$-stopping time such that $\tilde{Z}_{\sigma}>0$ $P$-a.s., where $\tilde{Z}=(\tilde{Z}_t)_{t\geq 0}$ is the Azéma supermartingale of $\tilde{\tau}$. Then $\tau:=\tilde{\tau}\wedge\sigma$ is easily seen to be an honest time (which does not avoid $\FF$-stopping times) and, as can be deduced from Theorem \ref{beforetau}, NFLVR still holds in the enlarged financial market $\M^{\GG}$ on $\left[0,\tau\right]$.

Observe that our results have been obtained under Assumption \ref{ass-PRP}, which implies that, under any ELMM$_{\FF}$ $Q$, the $\left(Q,\FF\right)$-local martingale $N$ appearing in the multiplicative decomposition of the Azéma $Q$-supermartingale of $\tau$ (see Lemma \ref{NikYor}) can be written as $N=1+\varphi\cdot S$. As can be easily checked (see in particular the proofs of part \emph{(ii)} of Theorem \ref{attau} and Proposition \ref{arb2} and Example \ref{rem-counterexample}), only the latter condition is necessary and, hence, the assumption that \emph{all} $\FF$-local martingales can be represented as stochastic integrals of $S$ can be significantly relaxed, provided that all $\FF$-local martingales are continuous.

Finally, we want to remark that the present paper gives a complete picture of the relations between honest times and arbitrage in the context of general financial market models based on \emph{continuous} semimartingales. However, at least under suitable additional assumptions, our results can be extended to the case where the discounted price process $S$ has possibly discontinuous paths. For instance, all the results of the present paper still hold if $S$ is assumed to be (under some ELMM$_{\FF}$ $Q$) a Lévy process and $\tau$ an honest time such that its Azéma supermartingale $Z$ admits a multiplicative decomposition as in Lemma \ref{NikYor}, such that $N=1+\varphi\cdot S=1+\varphi\cdot S^c$, where $S^c$ denotes the continuous $\left(Q,\FF\right)$-local martingale part of $S$. Note also that a result analogous to Lemma \ref{NikYor}, which plays a key role in the present paper, has been recently established in the discontinuous case by \citet{Kar2} (see also \citet{NP2}, Theorem 3.2). For reasons of space, we omit the details and postpone a complete study of the discontinuous case to a forthcoming work.
We also mention that, in a general semimartingale model, sufficient conditions for the stability of NA1 with respect to an arbitrary filtration expansion has been recently established in \citet{Shiqi}.

\vspace{1cm}
\begin{small}
\noindent\textbf{Acknowledgements:}
The authors are thankful to the Fédération Bancaire Française (within the chaire ``Risque de Crédit'' program) for generous financial support and to Anna Aksamit, Ashkan Nikeghbali, Marek  Rutkowski, Marc Yor, two anonymous referees and an associate editor for valuable comments that helped to improve the contents as well as the presentation of the paper.
\end{small}

\appendix
\section{Appendix}
\subsubsection*{Proof of Lemma \ref{comp}.}

Due to the representation \eqref{str-defl-sigma}, for any $\FF$-stopping time $\sigma$ and for any local martingale deflator $L$ in $\GG$ on the time horizon $[0,\sigma\wedge\tau]$, we can write:
\be	\label{comp-1}	\ba
E\left[L_{\sigma\wedge\tau}\right]
&= E\left[L_{\sigma}\ind_{\left\{\sigma<\tau\right\}}\right]
+E\left[L_{\tau}\ind_{\left\{\tau\leq\sigma\right\}}\right]	\\
&= E\left[\frac{1}{N_{\sigma}}\exp\left(-\int_0^{\sigma}\!\frac{k_s}{N^*_s}\,dN^*_s\right)
\ind_{\left\{\sigma<\tau\right\}}\right]
+E\left[\frac{1}{N_{\tau}}\exp\left(-\int_0^{\tau}\!\frac{k_s}{N^*_s}\,dN^*_s\right)\left(1+k_{\tau}+\eta\right)\ind_{\left\{\tau\leq\sigma\right\}}\right]\,.
\ea	\ee
Let us first focus on the first term on the right-hand side of \eqref{comp-1}. Recall that $\tau<\nu$ $P$-a.s. (see the proof of Proposition \ref{exdefl}) and that $Z=\left(Z_t\right)_{t\geq 0}$ is the $\FF$-optional projection of $(\ind_{\left\{\tau>t\right\}})_{t\geq 0}$ and $Z_{\sigma}/N_{\sigma}=1/N^*_{\sigma}$ on the set $\left\{\sigma<\nu\right\}$ (see Lemma \ref{NikYor}). Then, we can write as follows:
\be	\label{comp-2}	\ba
&E\left[\frac{1}{N_{\sigma}}\exp\left(-\int_0^{\sigma}\!\frac{k_s}{N^*_s}\,dN^*_s\right)
\ind_{\left\{\sigma<\tau\right\}}\right]
= E\left[\frac{1}{N_{\sigma}}\exp\left(-\int_0^{\sigma}\!\frac{k_s}{N^*_s}\,dN^*_s\right)
\ind_{\left\{\sigma<\tau\right\}}\ind_{\left\{\sigma<\nu\right\}}\right]	\\
&\mbox{} = E\left[\frac{1}{N_{\sigma}}\exp\left(-\int_0^{\sigma}\!\frac{k_s}{N^*_s}\,dN^*_s\right)
Z_{\sigma}\ind_{\left\{\sigma<\nu\right\}}\right]
= E\left[\frac{1}{N^*_{\sigma}}\exp\left(-\int_0^{\sigma}\!\frac{k_s}{N^*_s}\,dN^*_s\right)
\ind_{\left\{\sigma<\nu\right\}}\right]	\\
&\mbox{} = E\left[\exp\left(-\int_0^{\sigma}\!\frac{1+k_s}{N^*_s}\,dN^*_s\right)
\ind_{\left\{\sigma<\nu\right\}}\right]\,.
\ea	\ee
Let us compute more explicitly the second term on the right-hand side of \eqref{comp-1}. Recall that $E\left[\eta|\G_{\tau-}\right]=0$ (see \citet{JS}, Theorem 6.2). Recall also that, since all $\FF$-local martingales are continuous (due to Assumption \ref{ass-PRP} together with the continuity of $S$), Corollary 2.4 of \citet{NY} (see also \citet{MY}, Exercise 1.8) implies that the dual $\FF$-predictable projection of the process $(\ind_{\left\{\tau\leq t\right\}})_{t\geq 0}$ is given by $\bigl(\log\left(N^*_t\right)\bigr)_{t\geq 0}$ (we refer the reader to Section 4.3 of \citet{Nik} or Section I.3b of \citet{JacShi} for the definition and the properties of dual predictable projections). Moreover, the measure $dN^*_s$ is supported by the set $\left\{s\geq 0:Z_s=1\right\}=\left\{s\geq 0:N_s=N^*_s\right\}$. Then, we can write as follows, where the first equality follows by first taking the $\G_{\tau-}$-conditional expectation, recalling that $\{\tau>\sigma\}\in\G_{\tau-}$ (see \citet{JacShi}, \textsection I.1.17) and that $k_{\tau}$ is $\G_{\tau-}$-measurable (see \citet{JacShi}, Proposition I.2.4):
\be	\label{comp-3}	\ba
&E\left[\frac{1}{N_{\tau}}\exp\left(-\int_0^{\tau}\!\frac{k_s}{N^*_s}\,dN^*_s\right)\left(1+k_{\tau}+\eta\right)\ind_{\left\{\tau\leq\sigma\right\}}\right]
= E\left[\frac{1}{N_{\tau}}\exp\left(-\int_0^{\tau}\!\frac{k_s}{N^*_s}\,dN^*_s\right)\left(1+k_{\tau}\right)
\ind_{\left\{\tau\leq\sigma\right\}}\right]	\\
&\mbox{} = E\left[\int_0^{\sigma}\!\frac{1}{N_s}\exp\left(-\int_0^s\!\frac{k_u}{N^*_u}\,dN^*_u\right)\left(1+k_s\right)\frac{1}{N^*_s}\,dN^*_s\right]	 = E\left[\int_0^{\sigma}\!\frac{1}{N^*_s}\exp\left(-\int_0^s\!\frac{k_u}{N^*_u}\,dN^*_u\right)\left(1+k_s\right)\frac{1}{N^*_s}\,dN^*_s\right]	\\
&\mbox{} = E\left[\int_0^{\sigma}\!\exp\left(-\int_0^s\!\frac{1+k_u}{N^*_u}\,dN^*_u\right)\left(1+k_s\right)\frac{1}{N^*_s}\,dN^*_s\right]
= E\left[1-\exp\left(-\int_0^{\sigma}\!\frac{1+k_s}{N^*_s}\,dN^*_s\right)\right]\,.
\ea	\ee
Equation \eqref{comp-a} then follows by combining \eqref{comp-2} and \eqref{comp-3}, using the fact that, since $\tau<\nu$ $P$-a.s., we have $\sigma>\tau$ on the set $\left\{\sigma\geq\nu\right\}$ and noting that the process $N^*$ is constant after $\tau$. 
In order to show that $\int_0^{\tau}\!\frac{1+k_s}{N^*_s}\,dN^*_s>0$ $P$-a.s., it suffices to note the following, where we use the fact that $1+k_{\tau}>0$ $P$-a.s. (see Remark \ref{rem-rep}):
$$
E\left[\frac{1+k_{\tau}}{N^*_{\tau}}\ind_{\left\{\int_0^{\tau}\!\!\frac{1+k_s}{N^*_s}dN^*_s=0\right\}}\right]
= E\left[\int_0^{\infty}\frac{1+k_s}{N^*_s}\ind_{\left\{\int_0^s\!\!\frac{1+k_u}{N^*_u}dN^*_u=0\right\}}\frac{1}{N^*_s}\,dN^*_s\right] = 0
$$
thus implying that $\int_0^{\tau}\!\frac{1+k_s}{N^*_s}dN^*_s>0$ $P$-a.s.
The last assertion follows from the fact that the non-negative $\GG$-local martingale $L^{\sigma\wedge\tau}$ is a uniformly integrable $\GG$-martingale if and only if $E[L^{\sigma\wedge\tau}_{\infty}]=E[L_{\sigma\wedge\tau}]=1$. Due to \eqref{comp-a}, the latter holds if and only if $P(\nu\leq\sigma)=0$.
\qed

\end{document}